\documentclass[aps, pra, twocolumn, superscriptaddress, floatfix, nofootinbib, longbibliography]{revtex4-2}

\usepackage{amsmath,amssymb,amsfonts}
\usepackage{amsthm}
\usepackage{braket}
\usepackage{graphicx}
\usepackage[table]{xcolor}
\usepackage{booktabs}       
\usepackage{array}
\usepackage{hyperref}
\usepackage{bm}

\arrayrulecolor{gray}

\hypersetup{
  pdftitle ={Contextuality from the Projector Overlap Matrix},
  pdfauthor={A. C. G\"unhan, S. S. Aksoy, Z. Gedik},
  pdfkeywords={quantum contextuality, projector overlap matrix,
               Kochen-Specker, KCBS, CHSH},
  pdfsubject={Quantum Information}
}

\newtheorem{proposition}{Proposition}

\newcommand{\MIE}{E}
\newcommand{\Tcal}{\mathcal{T}}
\newcommand{\Gcal}{\mathcal{G}}
\newcommand{\HS}{\mathrm{HS}}

\newcommand{\MU}{\mathrm{MU}}

\newcommand{\tr}{\mathrm{Tr}}
\newcommand{\CF}{\mathrm{CF}}

\begin{document}

\title{Contextuality from the Projector Overlap Matrix}

\author{A. C. G\"unhan}
\email{alicangunhan@mersin.edu.tr}
\affiliation{Department of Physics, Mersin University,
  \c{C}iftlikk\"oy Merkez Yerle\c{s}kesi,
  Yeni\c{s}ehir, Mersin, 33343 T\"urkiye}

\author{S. S. Aksoy}
\affiliation{\c{S}ehit Mustafa G\"oksal Anatolian High School,
  Ministry of National Education, 01230 Adana, T\"urkiye}

\author{Z. Gedik}
\affiliation{Faculty of Engineering and Natural Sciences,
  Sabanc\i\ University, Tuzla, 34956 \.Istanbul, T\"urkiye}

\date{\today}

\begin{abstract}
We place several known indicators of Kochen--Specker
contextuality---the KCBS correlator $\chi$, the contextual
fraction $\CF$, the Shannon-entropic $n$-cycle inequality of
Chaves and Fritz, and the operational commutator witness
$D$ of Paper~I---into a single projector-geometric
framework organized around the overlap matrix
$\Tcal_{ij} = d^{-1}\tr[(\hat P_i \hat Q_j)^2]$, where
$\hat P_i$ and $\hat Q_j$ are the joint-eigenspace
projectors of the two compatible observable pairs within a
measurement context. The state-independent scalar content
of $\Tcal$ is carried by two independent contractions: the
mutual information energy $E = \sum_{ij}\Tcal_{ij}$ of
Paper~I (equivalently, its logarithmic form
$S_2 = -\log_2 E$), and the Maassen--Uffink extremal
overlap $c_\MU = \max_{i,j}|\langle a_i,b_i | c_j,b_j\rangle|$.
We prove that $S_2$ is non-increasing under coarse-graining,
that $S_2(\Gcal) > 0$ is a necessary configuration-level
condition for observable contextuality, and that the
additive composition
$S_2(\Gcal) = \sum_\alpha S_2(\Gcal_\alpha)$ is exact for the KCBS pentagon. We further show that in the spin-$1$ realization of the KCBS pentagon, a shared $m_s=0$ eigenstate in each context forces $c_\MU = 1$, rendering every Maassen--Uffink-type bound trivial---a structural mechanism that makes explicit why outcome-entropic uncertainty relations based on $c_\MU$ are silent on KCBS contextuality, while $S_2 \approx 2.7266$~bits throughout. Applied to KCBS and CHSH, the framework identifies regimes in which every state-dependent witness considered here is silent yet $S_2(\Gcal) > 0$ by an amount set by the projector geometry alone.
\end{abstract}

\maketitle

\section{Introduction}
\label{sec:intro}

Kochen--Specker contextuality~\cite{KochenSpecker1967,CabelloRev22}
is the statement that the statistics of compatible
measurements in a quantum system cannot be reproduced by
any noncontextual hidden-variable model in which
measurement outcomes are determined by the system alone,
independent of the measurement context. Equivalently, the
empirical model on the family of contexts admits no global
joint distribution consistent with all contextual marginals,
in the sense of the classical extension problem of
Vorob'ev~\cite{Vorobev1962,AbramskyBrandenburger2011}. A range of
quantitative witnesses has been developed to test this
obstruction: outcome correlators such as the
Klyachko--Can--Binicio\u{g}lu--Shumovsky
inequality~\cite{KCBS} (whose pentagon construction
originates in Klyachko's earlier geometric-invariant-theory
analysis of coherent states~\cite{Klyachko2002}), the contextual
fraction from the sheaf-theoretic
framework~\cite{AbramskyBrandenburger2011,AbramskyBarbosaMansfield2017},
Shannon-entropic $n$-cycle
inequalities~\cite{ChavesFritz2012,Chaves2013,Kurzynski2012},
and resource-theoretic
monotones~\cite{AmaralNCW2018,Amaral2019RT}, among others.
Each of these witness is \emph{state-dependent}: it reports a value that depends on the specific quantum state prepared on the configuration. A complementary perspective is offered by \emph{state-independent} quantities that depend only on the measurement configuration itself, and whose positivity encodes geometric properties of the projector families---the Maassen--Uffink parameter $c_\MU$~\cite{MaassenUffink1988}, its quantum-memory extension~\cite{Berta2010,Coles2017}, and the mutual information energy $E$ introduced in Paper~I~\cite{GunhanGedik2026} all belong to this class. Graph-theoretic invariants of the compatibility (exclusivity) structure of a contextuality scenario---the Lov\'asz number and the fractional packing quantities~\cite{CabelloSeveriniWinter2014}---offer a related but distinct state-independent perspective: they are combinatorial invariants of the scenario graph that bound the noncontextual, quantum, and no-signaling correlation polytopes, whereas the scalar contractions of $\Tcal$ studied here are analytic invariants of the joint-eigenspace projector geometry.

Paper~I~\cite{GunhanGedik2026} introduced $E$ as a
projector-geometric quadratic contraction of the overlap
matrix $\Tcal_{ij} \equiv d^{-1}\tr[(\hat P_i \hat Q_j)^2]$,
and an operational quantity
$D(\Gcal_\alpha, \hat\rho) = |\langle [\hat A_\alpha,
\hat C_\alpha]\rangle_{\hat\rho}|$ which is
state-dependent but directly measurable through commutator
expectation values. The present paper places several known
contextuality indicators in a single framework organized
around $\Tcal$ and two independent scalar contractions: the
quadratic contraction $E$ and the extremal contraction
$c_\MU$. The logarithmic form
$S_2 \equiv -\log_2 E$ is the same quantity as $E$ written
to make additive composition across independent contexts
manifest; we use $S_2$ throughout as the working form and
reserve $E$ for its quadratic expression. $S_2$ records a
necessary configuration-level condition for observable
contextuality (Prop.~\ref{prop:necessary}): $S_2 > 0$
whenever any state-dependent witness fires, while remaining
positive also in regimes where every state-dependent
witness is silent. The extremal contraction
$c_\MU$---long established as the tight constant in the
Maassen--Uffink entropic uncertainty
relation~\cite{MaassenUffink1988} and its
extensions~\cite{Berta2010,Coles2017}---here enters in a
contextuality-analytic role:
Prop.~\ref{prop:shared_eigenstate} identifies a structural
mechanism forcing $c_\MU = 1$ in every context of the
spin-$1$ KCBS realization, rendering every $c_\MU$-based
entropic bound trivial in that scenario. Where the
operational quantity $D$ of Paper~I reads an
experimentally accessible signature of a particular state,
$S_2$ reads a state-independent signature of the
configuration; the two quantities complement rather than
replace one another.

The main results of this work are the following.
(i)~We place $E$ (equivalently $S_2 = -\log_2 E$) and
$c_\MU$ as two independent scalar contractions of $\Tcal$,
and establish the basic bound $E \le c_\MU^2$
(Sec.~\ref{sec:overlap}).
(ii)~We prove non-increase of $S_2$ under coarse-graining
of either projector family
(Prop.~\ref{prop:monotone}).
(iii)~We identify an exactness criterion for the additive
composition $S_2(\Gcal) = \sum_\alpha S_2(\Gcal_\alpha)$
and verify it for KCBS by cyclic orthogonality
(Sec.~\ref{sec:multi-context}).
(iv)~We establish that $S_2(\Gcal) > 0$ is a necessary
configuration-level condition for observable contextuality
(Prop.~\ref{prop:necessary}).
(v)~We prove that $c_\MU = 1$ in every context of the
spin-$1$ realization of the KCBS pentagon
(Prop.~\ref{prop:shared_eigenstate}), so that every
$c_\MU$-based entropic bound---Maassen--Uffink, its
memory extension, and the Shannon-entropic $n$-cycle
inequality of Chaves and Fritz---degenerates or remains
silent on the KCBS state family $\hat\rho(p)$, yet
$S_2(\Gcal_\mathrm{KCBS}) \approx 2.7266$~bits throughout
(Fig.~\ref{fig:kcbs_witnesses}).
(vi)~We examine the CHSH 4-cycle at two angle regimes and
record the complementary activation of $\chi_\mathrm{CHSH}$
(in its standard sign-flipped form) and the Chaves--Fritz
inequality identified in~\cite{Chaves2013}; the two regimes
realize opposite saturation behaviors of the bound
$E \le c_\MU^2$ within a single configuration
(Sec.~\ref{sec:CHSH_results}).

The organization is as follows.
Section~\ref{sec:setting} fixes the setting and states the
properties of the overlap matrix and its contractions.
Section~\ref{sec:indicators} lists the contextuality
indicators we compare: Table~\ref{tab:state_dep} collects
the state-dependent witnesses,
Table~\ref{tab:state_indep} the configuration-level
quantities, and Table~\ref{tab:landscape} summarizes the
scenario-by-scenario behavior.
Sections~\ref{sec:KCBS} and~\ref{sec:CHSH} work out the two
representative scenarios.
Section~\ref{sec:discussion} discusses the resulting
picture, Section~\ref{sec:outlook} lists open directions,
and Section~\ref{sec:conclusion} concludes. Computational
details for every table entry are collected in the
Supplemental Material~\cite{GunhanAksoyGedikSM}.

\section{Setting}
\label{sec:setting}

In what follows we work in finite-dimensional Hilbert space $\mathcal{H}$ of dimension $d\ge3$, the regime in which Kochen--Specker (KS) contextuality is non-degenerate. A measurement context, in the KS sense~\cite{KochenSpecker1967,CabelloRev22}, consists of three observables
\begin{equation}
  \Gcal_\alpha = \{\hat A_\alpha, \hat B_\alpha, \hat C_\alpha\},
  \label{eq:context}
\end{equation}
in which $\hat B_\alpha$ commutes with both $\hat A_\alpha$ and
$\hat C_\alpha$. The index $\alpha$ labels the context when
several are considered together, as in the
Klyachko--Can--Binicio\u{g}lu--Shumovsky (KCBS)
pentagon~\cite{KCBS}; within a single context it may be dropped.

Since $\hat A_\alpha$ and $\hat B_\alpha$ commute, they admit a common eigenbasis. Let $\hat P_i$ denote the projector onto the joint-eigenspace of $(\hat A_\alpha,\hat B_\alpha)$ with eigenvalue pair $(a_i, b_i)$, and let $\hat Q_j$ denote the projector onto the joint-eigenspace of $(\hat C_\alpha,\hat B_\alpha)$ with eigenvalue pair $(c_j, b_j)$. Both families are complete, $\sum_i \hat P_i = \sum_j \hat Q_j = \hat{\mathbb{I}}$, and need not be rank-$1$. The operational quantity of Paper~I,
\begin{equation}
  D(\Gcal_\alpha, \hat\rho)
    \equiv \bigl|\langle [\hat A_\alpha, \hat C_\alpha]\rangle_{\hat\rho}\bigr|,
  \label{eq:D_def}
\end{equation}
enters the comparison of state-dependent witnesses in
Sec.~\ref{sec:KCBS}--\ref{sec:CHSH}; the definitions of the present
section concern only the projector families $\{\hat P_i\}$ and
$\{\hat Q_j\}$.

\subsection{The overlap matrix and its contractions}
\label{sec:overlap}

The projector-geometric content of the pair
$(\{\hat P_i\},\{\hat Q_j\})$ is encoded in the overlap matrix
\begin{equation}
  \Tcal_{ij} \equiv \frac{1}{d}\,\tr\bigl[(\hat P_i \hat Q_j)^2\bigr],
  \label{eq:Tij_def}
\end{equation}
which depends only on the projectors and the dimension, not on
any quantum state. From the definition and completeness,
$0\le \Tcal_{ij}\le 1/d$ and
\begin{equation}
  \sum_{i,j}\Tcal_{ij} = \MIE(\Gcal_\alpha) \in [1/d,\,1],
  \label{eq:E_def}
\end{equation}
where $\MIE$ is the mutual information energy of Paper~I~\cite{GunhanGedik2026} defined inspired by Onicescu's information energy~\cite{onicescu1966theorie}; $\Tcal$ is not a probability table, as its total weight depends on the projector geometry and is generally less than unity. When all projectors commute,
$\MIE=1$; in the rank-$1$ mutually unbiased case,
$\MIE=1/d$~\cite{GunhanGedik2026}. In the rank-$1$ case,
$\Tcal_{ij} = d^{-1}|\langle a_i,b_i|c_j,b_j\rangle|^4$.

Two scalar contractions of $\Tcal$ arise naturally.
The quadratic contraction~\eqref{eq:E_def} is given in logarithmic form
\begin{equation}
  S_2 \equiv -\log_2 \MIE.
  \label{eq:S2_def}
\end{equation}
We use the base-$2$ logarithm throughout, so $S_2$, Shannon entropies $H(\cdot)$, and the Maassen--Uffink bound $-2\log_2 c_\MU$~\cite{MaassenUffink1988} are all reported in bits. This matches the conventions of
Refs.~\cite{ChavesFritz2012,Chaves2013,Kurzynski2012,Berta2010}; in the original Maassen--Uffink presentation~\cite{MaassenUffink1988} the bound appears in nats with $-2\ln c$. The extremal contraction is
\begin{equation}
  c_\MU \equiv \max_{i,j}\, |\langle a_i,b_i|c_j,b_j\rangle|,
  \label{eq:cMU}
\end{equation}
the maximal amplitude overlap that controls the Maassen--Uffink entropic uncertainty relation~\cite{MaassenUffink1988} and its quantum-memory extension~\cite{Berta2010,Coles2017}. For higher-rank projectors $c_\MU$ is defined as the maximal amplitude overlap between rank-$1$ constituents of the two eigenbases. 

These two contractions stand in a definite hierarchy. The matrix $\Tcal$ determines both $\MIE$ and $c_\MU$, but neither uniquely determines $\Tcal$, and $c_\MU$ does not determine $\MIE$. A quantitative bound between the two is
\begin{equation}
  \MIE \le c_\MU^2,
  \label{eq:E_leq_cMU2}
\end{equation}
which follows in the rank-$1$ case from $\sum_j|\langle a_i,b_i|c_j,b_j\rangle|^2=1$ by bounding one factor by $c_\MU^2$. The bound is saturated when every nonzero rank-$1$ overlap equals $c_\MU^2$. Paper~I~\cite{GunhanGedik2026} observed that this bound can be looser in some scenarios and tight in others; the two scenarios of Sec.~\ref{sec:KCBS}--\ref{sec:CHSH} make this dichotomy explicit: KCBS is strictly below saturation, while the CHSH 4-cycle at Bell-optimal angles saturates the bound exactly (Sec.~\ref{sec:CHSH_results}).

\subsection{Properties of \texorpdfstring{$S_2$}{S2}}
\label{sec:properties}

The logarithmic form $S_2 = -\log_2 \MIE$
is chosen to convert the multiplicative bound $\MIE \le c_\MU^2$ into the additive comparator $S_2 \ge 2\log_2(1/c_\MU)$ familiar from entropic uncertainty relations, while preserving the configuration-level character of $\MIE$. The resulting quantity has $S_2 \in [0, \log_2 d]$ and four further properties listed below.

\emph{Faithfulness.}~~$S_2=0$ iff $\MIE=1$, which holds iff every pair $(\hat P_i,\hat Q_j)$ commutes and the two families coincide up to relabeling.

\emph{Maximality.}~~$S_2=\log_2 d$ iff $\MIE=1/d$; saturated in the rank-$1$ mutually unbiased case.

\emph{State independence.}~~$S_2$ depends only on the spectral projectors of the observables; no quantum state appears.

\emph{Basis independence.}~~Under a unitary relabeling of $\mathcal{H}$, each entry of $\Tcal$ is unchanged by cyclicity of the trace.

A fifth property---non-increase under coarse-graining of either family---is the content of Sec.~\ref{sec:coarse}.

An identity established in Paper~I expresses $\MIE$ in terms of projector commutators. For any two Hermitian projectors $\hat P,\hat Q$ with $\hat P^2=\hat P$ and $\hat Q^2=\hat Q$,
\begin{equation}
  \|[\hat P,\hat Q]\|_{\HS}^2
    = 2\tr(\hat P\hat Q) - 2\tr\bigl[(\hat P\hat Q)^2\bigr],
  \label{eq:comm_HS}
\end{equation}
where $\|\hat X\|_{\HS}^2=\tr(\hat X^\dagger \hat X)$ is the squared Hilbert--Schmidt norm of $X$. Summing
over all pairs and using $\sum_i \tr(\hat P_i\hat Q_j)=d$,
\begin{eqnarray}
  1 - \MIE
    &=& \frac{1}{2d}\sum_{i,j}\|[\hat P_i,\hat Q_j]\|_{\HS}^2,\\
  S_2 &=& -\log_2(1 - \text{{\small RHS}}).
  \label{eq:S2_comm_form}
\end{eqnarray}
When all projectors commute, the commutator sum vanishes and $S_2=0$; as the sum grows, $S_2$ increases toward $\log_2 d$.

We use the word ``entropic'' sparingly and only to indicate a functional parallel: $S_2 = -\log_2 \MIE$ has the \emph{form} of a R\'enyi-$2$ collision entropy $-\log_2\sum_i p_i^2$, with $\MIE$ playing the role of $\sum_i p_i^2$~\cite{Renyi1961}. The analogy is structural rather than definitional: $\MIE$ is not generally a collision term of any probability distribution, and $S_2$ is therefore \emph{not} an entropy in the measure-theoretic sense. $S_2$ is a projector-geometric quantity whose logarithmic form is chosen for the reasons given after~\eqref{eq:E_leq_cMU2}, and we adopt ``entropic'' elsewhere in this paper only to refer to the Shannon-entropic inequalities of Refs.~\cite{ChavesFritz2012,Chaves2013,Kurzynski2012} and to the entropic uncertainty relations of Refs.~\cite{MaassenUffink1988,Berta2010,Coles2017}, never to $S_2$ itself.

\subsection{Coarse-graining non-increase}
\label{sec:coarse}

\emph{Coarse-graining} means merging two members of a family into their sum. Given $\{\hat P_i\}$ with orthogonal components, replacing a pair $\hat P_i,\hat P_{i'}$ by $\hat P_i+\hat P_{i'}$ gives a coarser family that is still a complete set of
orthogonal projectors.

\begin{proposition}
\label{prop:monotone}
Let $\{\hat P_i\}$ and $\{\hat Q_j\}$ be two complete families
of orthogonal projectors on $\mathcal{H}$, and let
$\{\hat P_k'\}$ denote the family obtained from $\{\hat P_i\}$
by replacing $\hat P_i,\hat P_{i'}$ (with $i\ne i'$) by their
sum. Then
\begin{equation}
  \MIE(\{\hat P_k'\},\{\hat Q_j\}) \ge \MIE(\{\hat P_i\},\{\hat Q_j\}),
  \label{eq:E_coarse}
\end{equation}
with equality iff $\hat P_i \hat Q_j \hat P_{i'} = 0$ for every
$j$. Consequently $S_2(\{\hat P_k'\},\{\hat Q_j\}) \le
S_2(\{\hat P_i\},\{\hat Q_j\})$, and the same holds for
coarse-graining of $\{\hat Q_j\}$.
\end{proposition}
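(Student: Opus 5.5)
The plan is a direct computation of the difference $\MIE(\{\hat P_k'\},\{\hat Q_j\})-\MIE(\{\hat P_i\},\{\hat Q_j\})$, organized so that it becomes a sum of squared Hilbert--Schmidt norms. All summands of $\MIE$ that do not involve the merged pair are the same on both sides. So only the terms with $\hat P_i$, $\hat P_{i'}$ and their sum matter. For fixed $j$, write $\hat P\equiv\hat P_i$, $\hat P'\equiv\hat P_{i'}$ and $\hat Q\equiv\hat Q_j$. Expanding $\tr[((\hat P+\hat P')\hat Q)^2]$ bilinearly gives
\begin{equation*}
\tr[((\hat P+\hat P')\hat Q)^2]=\tr[(\hat P\hat Q)^2]+\tr[(\hat P'\hat Q)^2]+2\,\tr(\hat P\hat Q\hat P'\hat Q).
\end{equation*}
Here the two mixed terms $\tr(\hat P\hat Q\hat P'\hat Q)$ and $\tr(\hat P'\hat Q\hat P\hat Q)$ coincide by cyclicity of the trace. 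Hence
\begin{equation*}
\MIE(\{\hat P_k'\},\{\hat Q_j\})-\MIE(\{\hat P_i\},\{\hat Q_j\})=\frac{2}{d}\sum_j \tr(\hat P_i\hat Q_j\hat P_{i'}\hat Q_j).
\end{equation*}

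The key step is to show that each cross term is nonnegative and to identify when it vanishes. I would use idempotence, $\hat P=\hat P^2$ and $\hat P'=\hat P'^2$, and then cyclicity:
\begin{equation*}
\tr(\hat P\hat Q\hat P'\hat Q)=\tr(\hat P\hat P\hat Q\hat P'\hat P'\hat Q)=\tr\bigl[(\hat P\hat Q\hat P')(\hat P'\hat Q\hat P)\bigr].
\end{equation*}
All three operators are Hermitian, so $(\hat P\hat Q\hat P')^\dagger=\hat P'\hat Q\hat P$. The right-hand side is therefore $\|\hat P\hat Q\hat P'\|_{\HS}^2\ge 0$. Summing over $j$ gives inequality~\eqref{eq:E_coarse}. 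Equality holds iff every term vanishes, that is, iff $\hat P_i\hat Q_j\hat P_{i'}=0$ for all $j$, because the Hilbert--Schmidt norm is faithful. This is exactly the stated condition.

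The remaining claims follow quickly. Since $-\log_2$ is strictly decreasing on $(0,1]$ and $\MIE\ge 1/d>0$ by~\eqref{eq:E_def}, the inequality for $\MIE$ gives $S_2(\{\hat P_k'\},\{\hat Q_j\})\le S_2(\{\hat P_i\},\{\hat Q_j\})$. For coarse-graining of $\{\hat Q_j\}$, I would note that $\tr[(\hat P\hat Q)^2]=\tr[(\hat Q\hat P)^2]$ by cyclicity. So $\MIE$ is symmetric under exchanging the two families, and the same argument applies with the roles of $\hat P$ and $\hat Q$ swapped; the equality condition then reads $\hat Q_j\hat P_i\hat Q_{j'}=0$ for all $i$. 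I do not expect a serious obstacle. The only delicate point is rewriting the cross term as a squared norm through the insertion $\hat P=\hat P^2$; a naive bound such as Cauchy--Schwarz would not give positivity.
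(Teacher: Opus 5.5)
Your proposal is correct and follows essentially the same route as the paper's proof (Supplemental Material, Sec.~B). Both expand the merged term, isolate the cross term $\frac{2}{d}\sum_j \tr(\hat P_i\hat Q_j\hat P_{i'}\hat Q_j)$, identify each summand as $\|\hat P_i\hat Q_j\hat P_{i'}\|_{\HS}^2$ using idempotence and cyclicity, read off the equality condition from faithfulness of the Hilbert--Schmidt norm, and treat merges in $\{\hat Q_j\}$ by the same argument with the families' roles exchanged.
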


The proof reduces to the inequality $\tr[\hat P_i\hat Q_j\hat P_{i'}\hat Q_j] =
\|\hat P_i\hat Q_j\hat P_{i'}\|_\HS^2 \ge 0$, which holds for $\hat P_i \hat P_{i'}=0$ (as required by orthogonality of the family) and any $\hat Q_j$. Summing over $j$ yields the non-negative change of $\MIE$ in~\eqref{eq:E_coarse}. The full derivation and numerical verification over random projector configurations are in the Supplemental Material~\cite{GunhanAksoyGedikSM}, Sec.~B.

Iterating Proposition~\ref{prop:monotone} covers every
coarse-graining of either family by successive pairwise merges.
Unitary relabelings leave every entry of $\Tcal$ unchanged, so
$S_2$ is invariant under them. The result is consistent with,
and narrower than, the resource-theoretic framework of
Amaral \textit{et al.}~\cite{AmaralNCW2018,Amaral2019RT}, in
which the free operations are noncontextual wirings---a
strictly larger class including pre-processing, composition,
and stochastic mixing. We have established non-increase of
$S_2$ under coarse-graining of the joint-eigenspace projector
families only; whether $S_2$ is non-increasing under the full
class of noncontextual wirings is left open. The word
``monotone'' is used here in this restricted sense.

\subsection{Multi-context composition}
\label{sec:multi-context}

For a collection of contexts $\Gcal = \{\Gcal_\alpha\}_{\alpha=1}^N$
we define the total quantity by summation,
\begin{equation}
  S_2(\Gcal) = \sum_{\alpha=1}^N S_2(\Gcal_\alpha)
             = -\log_2\prod_{\alpha=1}^N \MIE(\Gcal_\alpha).
  \label{eq:S2_multi}
\end{equation}
This additive composition is exact whenever the joint eigenbases across contexts either (i)~share no projectors, or (ii)~share only projector pairs whose overlap
contribution $\tr[(\hat P_i\hat Q_j)^2]$ vanishes. Both
regimes occur in the scenarios we analyze:
condition~(ii) is realized in the KCBS scenario of
Sec.~\ref{sec:KCBS}, where shared rank-$1$ projectors appear
as adjacent pairs
$(\ket{0_\alpha}\!\bra{0_\alpha},
\ket{0_{\alpha+1}}\!\bra{0_{\alpha+1}})$ whose contribution
vanishes by cyclic orthogonality
$\braket{0_\alpha|0_{\alpha+1}}=0$; condition~(i) is
realized in the CHSH scenario of Sec.~\ref{sec:CHSH}, where
the Alice/Bob tensor-product structure of the $4$-cycle
forces the shared observable $\hat B_\alpha=\hat A_\alpha$ to
live on a different subsystem in each consecutive context, so
that no ordered joint-projector pair appears in more than one
context. In both cases $\sum_\alpha S_2(\Gcal_\alpha)$
coincides with the deduplicated joint-overlap sum exactly.
Whether there exist KS-contextual scenarios for which the
additive composition strictly overcounts the joint-overlap
value is an open question; we return to it in
Sec.~\ref{sec:outlook}.

\subsection{\texorpdfstring{$S_2$}{S2} as a necessary configuration-level
  condition}
\label{sec:necessary}

The properties recorded so far yield an immediate but
useful consequence that we use throughout the
scenario-by-scenario analysis.

\begin{proposition}
\label{prop:necessary}
Let $\Gcal = \{\Gcal_\alpha\}_{\alpha=1}^N$ be a collection of contexts and let $S_2(\Gcal)$ be the additive value of~\eqref{eq:S2_multi}. If $S_2(\Gcal) = 0$, then every contextuality witness consistent with the empirical model on $\Gcal$ vanishes for every quantum state $\hat\rho$.
\end{proposition}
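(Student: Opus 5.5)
The plan is to go from $S_2(\Gcal)=0$ to full commutativity of all relevant projectors, then build a global joint distribution, and finally use the fact that a noncontextual empirical model makes every witness vanish.

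\emph{Step 1: reduce to single contexts.} Each $\MIE(\Gcal_\alpha)\in[1/d,1]$, so each $S_2(\Gcal_\alpha)\ge 0$. By~\eqref{eq:S2_multi}, $S_2(\Gcal)=0$ then forces $S_2(\Gcal_\alpha)=0$ for every $\alpha$.

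\emph{Step 2: get commutativity inside each context.} By Faithfulness, equivalently by the commutator identity~\eqref{eq:S2_comm_form}, $S_2(\Gcal_\alpha)=0$ means
\[
\sum_{i,j}\|[\hat P_i,\hat Q_j]\|_{\HS}^2=0,
\]
so $[\hat P_i,\hat Q_j]=0$ for all $i,j$. Each of $\hat A_\alpha,\hat B_\alpha,\hat C_\alpha$ is a function of the $\hat P_i$ or of the $\hat Q_j$. Hence $[\hat A_\alpha,\hat C_\alpha]=0$, and the whole context is pairwise commuting.

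\emph{Step 3: pass to global commutativity.} This is the main obstacle. Commutativity inside each context does not by itself give commutativity across contexts, and KS-type noncontextuality needs a joint distribution over all observables in $\Gcal$. I would first try to read the statement in the sense fixed by the paper's framework: each context consists of $\hat A_\alpha,\hat C_\alpha$ sharing $\hat B_\alpha$, and incompatibility is probed only through the pairs $(\hat A_\alpha,\hat C_\alpha)$. In a cyclic scenario (KCBS, CHSH), consecutive contexts share an observable, $\hat C_\alpha=\hat A_{\alpha+1}$ in the $n$-cycle labeling. Every adjacent pair of observables then appears as some $(\hat A_\alpha,\hat C_\alpha)$ or as a compatible pair sharing $\hat B$. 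So vanishing of all $[\hat P_i,\hat Q_j]$ makes every pair of cycle observables that enters the scenario commute.

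I would then argue that a finite set of pairwise commuting Hermitian operators has a common spectral resolution. For any $\hat\rho$, the Born rule on the joint projectors $\prod_k \hat\Pi^{(k)}_{x_k}$ then gives a global distribution whose marginals reproduce every contextual marginal. If pairwise commutation is not available for every pair, I would state this as an explicit hypothesis: $S_2(\Gcal_\alpha)$ covers all incompatible pairs of the scenario, which is the setting in which the paper uses $\Gcal$.

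\emph{Step 4: conclude.} With a global joint distribution for every $\hat\rho$, the empirical model is noncontextual in the Vorob'ev/Abramsky--Brandenburger sense. Every witness consistent with the empirical model then takes its noncontextual value: $\CF=0$, the KCBS and CHSH correlators obey their classical bounds, the Chaves--Fritz entropic inequalities hold, and $D=|\langle[\hat A_\alpha,\hat C_\alpha]\rangle|=0$ directly by Step 2. This is the contrapositive of ``$S_2>0$ whenever a witness fires.''
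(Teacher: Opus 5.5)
Your Steps 1, 2 and 4 match the paper's proof, but you justify the gluing step differently. Using the commutator identity \eqref{eq:S2_comm_form} directly is cleaner than the paper's appeal to Faithfulness: $\MIE=1$ iff all $[\hat P_i,\hat Q_j]=0$, and the two families need not coincide.

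The paper's route looks shorter and fully general. It passes from per-context joint eigenbases to a global noncontextual model by assuming they are ``consistent on shared observables.'' Read as agreement of the shared marginals, which its parenthetical suggests is automatic, this is just the no-disturbance property every quantum model already has. It does not produce a global section; the failure of such a section is exactly contextuality.

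Your route instead obtains one common spectral resolution from pairwise commutation of \emph{all} observables. This is valid, and you rightly flag that it needs more than per-context commutation.

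You can sharpen Step 3 as follows. In the cyclic labeling the contexts give commutation for pairs at cyclic distance $1$ (by assumption) and at distance $2$ (the outer pairs $(\hat A_{\alpha-1},\hat A_{\alpha+1})$). This covers every pair iff $N\le 5$, so for the paper's CHSH ($N=4$) and KCBS ($N=5$) cases your argument is unconditional.

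Already at $N=6$ your extra hypothesis cannot be dropped, because the statement itself fails. Take $\hat A_1,\hat A_4$ to be $\hat\sigma_{a_0},\hat\sigma_{a_1}$ on qubit~1, $\hat A_2,\hat A_5$ to be $\hat\sigma_{b_0},\hat\sigma_{b_1}$ on qubit~2, and $\hat A_3,\hat A_6$ acting on a third qubit. Every triple $\{\hat A_{\alpha-1},\hat A_\alpha,\hat A_{\alpha+1}\}$ is then fully commuting, so $S_2(\Gcal)=0$. Yet the context marginals contain the four CHSH pairs $(\hat A_1,\hat A_2)$, $(\hat A_2,\hat A_4)$, $(\hat A_4,\hat A_5)$, $(\hat A_5,\hat A_1)$. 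They reach the Tsirelson value $2\sqrt{2}$ on $\ket{\Phi^+}\otimes\ket{0}$ at the Bell-optimal angles.

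Your explicitly stated hypothesis is therefore the right way to state the result, and it is the one place where your proof is more accurate than the paper's.
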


\begin{proof}
By the definition~\eqref{eq:S2_multi}, $S_2(\Gcal) = 0$ requires $S_2(\Gcal_\alpha) = 0$ for every $\alpha$ (since each $S_2(\Gcal_\alpha) \ge 0$), and by Faithfulness (Sec.~\ref{sec:properties}) this occurs iff every pair $(\hat P_i, \hat Q_j)$ within each context commutes, and the two families coincide up to relabeling. In that case each context $\Gcal_\alpha$ admits a joint eigenbasis: a hidden-variable assignment given by the corresponding spectral outcomes reproduces the quantum statistics on $\Gcal_\alpha$. If the joint eigenbases across contexts are consistent on shared observables (a condition already required for compatibility in the sense of Sec.~\ref{sec:setting}), a noncontextual model exists on $\Gcal$ as a whole; hence any contextuality witness vanishes for every $\hat\rho$.
\end{proof}

Equivalently, $S_2(\Gcal) > 0$ is a \emph{necessary}
configuration-level condition for $\Gcal$ to exhibit
observable contextuality on any quantum state. The converse
does not hold in general: $S_2(\Gcal) > 0$ is compatible
with every state-dependent witness being silent, as
realized in the KCBS scenario for $p \le p_\star$
(Sec.~\ref{sec:KCBS_results}).

\section{Indicators of contextuality}
\label{sec:indicators}

We list the contextuality indicators we compare, separated into state-dependent witnesses (Table~\ref{tab:state_dep}) and configuration-level quantities (Table~\ref{tab:state_indep}), and we summarize the scenario-by-scenario behavior in Table~\ref{tab:landscape}. The state-dependent witnesses are the cycle correlator $\chi$ (the KCBS sum for $n=5$ and the sign-flipped CHSH combination $\chi_\mathrm{CHSH}$ for $n=4$), the contextual fraction $\CF$ of Abramsky, Barbosa, and Mansfield~\cite{AbramskyBarbosaMansfield2017}, the Shannon-entropic $n$-cycle inequality $\mathrm{BC}_n^k$ of Chaves and Fritz~\cite{ChavesFritz2012,Chaves2013}, and the operational commutator witness $D$ of Paper~I~\cite{GunhanGedik2026}. The configuration-level quantities are the Maassen--Uffink extremal overlap $c_\MU$~\cite{MaassenUffink1988}, together with the associated Maassen--Uffink entropic bound $-2\log_2 c_\MU$ and its quantum-memory extension~\cite{Berta2010,Coles2017}; and the mutual information energy $E$ of Paper~I (equivalently, its logarithmic form $S_2 = -\log_2 E$). Every cell entry is derived in the Supplemental Material~\cite{GunhanAksoyGedikSM} (Secs.~D--E); values quoted below are exact where possible and rounded to four decimal places otherwise.

\begin{table*}[htb]
\caption{State-dependent witnesses of contextuality. KCBS values are on the family $\hat\rho(p)=p\ket{0_z}\!\bra{0_z}+(1-p)\hat{\mathbb{I}}/3$ with threshold $p_\star=(3\sqrt{5}+5)/20$. CHSH values are on Bell state $\ket{\Phi^+}=(\ket{00}+\ket{11})/\sqrt{2}$ at two angle choices: the Bell-optimal angles $(0,\pi/4,\pi/2,-\pi/4)$ (column CHSH-B) and the entropic-optimal angles of Ref.~\cite{Chaves2013} (column CHSH-E); see Sec.~\ref{sec:CHSH}. \textit{act.\ /\ sil.}\ records whether the witness reports quantum contextuality (active) or not (silent). The cycle correlator $\chi$ is $\sum_\alpha \langle\hat A_\alpha \hat A_{\alpha+1}\rangle$ for KCBS ($n=5$, no sign flip) and the sign-flipped CHSH combination $\chi_\mathrm{CHSH}$ for the CHSH columns ($n=4$, Eq.~\eqref{eq:CHSH_chi}). Structural reasons are discussed in the corresponding sections and in SM~\cite{GunhanAksoyGedikSM}.}
\label{tab:state_dep}
\begin{ruledtabular}
\begin{tabular}{l l l l}
Indicator & KCBS & CHSH-B & CHSH-E \\
\hline
$\CF$
  & $0$ if $p{\le}p_\star$; $2(\sqrt{5}{-}2)$ at $\ket{0_z}$
  & $\sqrt{2}-1$
  & $0$ (silent) \\
\hline
$\chi$ (cycle correlator)
  & act.\ iff $p{>}p_\star$
  & act.\ ($2\sqrt{2}$)
  & sil.\ ($1.5329$) \\
\hline
$\mathrm{BC}_n^k$ (Chaves--Fritz)
  & sil.; ${\le}{-}1.1667$ bits
  & sil.\ ($-1.2018$)
  & act.\ ($+0.2309$) \\
\hline
$D$ (Paper~I)
  & $0$ on $\hat\rho(p)$; act.\ on $\ket{+1_z}$
  & $0$ on $\ket{\Phi^+}$
  & $0$ on $\ket{\Phi^+}$ \\
\hline
$-2\log_2 c_\MU$
  & $=0$ (Prop.~3)
  & $1$~bit
  & ${\approx}\ 0.1002$~bits \\
Berta EUR-QSI
  & trivial
  & nontrivial
  & nontrivial \\
\end{tabular}
\end{ruledtabular}
\end{table*}

\begin{table*}[htb]
\caption{Configuration-level quantities, defined from the overlap matrix $\Tcal$. CHSH columns are the same two angle choices as in Table~\ref{tab:state_dep}.  ``Sat.''\ is the saturation test $E\stackrel{?}{=}c_\MU^2$. The last row indicates whether the additive composition $S_2(\Gcal) = \sum_\alpha S_2(\Gcal_\alpha)$ is exact (Sec.~\ref{sec:setting}); the mechanism is given in parentheses---``cyclic orth.''\ refers to mechanism~(ii) (every shared pair contributes zero by cyclic orthogonality), ``distinct bases''\ to mechanism~(i) (no shared pair).}
\label{tab:state_indep}
\begin{ruledtabular}
\begin{tabular}{l l l l}
Quantity & KCBS & CHSH-B & CHSH-E \\
\hline
$E$ (Paper~I)   & $0.6852$ per ctx. & $0.5000$ per ctx.     & $\{0.8147, 0.8748\}$ alternating     \\
\hline
$c_\MU$         & $1$ (Prop.~3)     & $1/\sqrt{2}$       & ${\approx}\ \{0.9469, 0.9659\}$  alternating        \\
\hline
$E=c_\MU^2$ Sat.& no                & yes                & no                      \\
\hline
$\bm{S_2(\Gcal)}$ \textit{(this work)}
                & $5(-\log_2 0.6852) \approx 2.7266$~bits
                & $4$~bits
                & ${\approx}\ 0.9770$~bits \\
\hline
additive composition
                & \textbf{exact} (cyclic orth.)
                & \textbf{exact} (distinct bases)
                & \textbf{exact} (distinct bases) \\
\end{tabular}
\end{ruledtabular}
\end{table*}

\begin{table*}[htb]
\caption{Scenario landscape. For each scenario (and for each
CHSH angle regime), the active / silent pattern of the four
main state-dependent witnesses and the configuration-level
value $S_2(\Gcal)$. The pattern visible in the last column is the main observation of this work: $S_2(\Gcal)>0$ uniformly, while
the silent cells in the state-dependent columns depend on the
scenario and the quantum state.}
\label{tab:landscape}
\begin{ruledtabular}
\begin{tabular}{l l l l l l}
Scenario & $\chi$ & $\mathrm{BC}_n^k$ & $\CF$ & $D$ & $S_2(\Gcal)$ \\
\hline
KCBS (spin-1)
  & act.\ iff $p>p_\star$
  & silent (Prop.~3)
  & act.\ iff $p>p_\star$
  & $0$ on $\hat\rho(p)$; act.\ on TR-broken $\hat\rho$
  & ${\approx}\ 2.7266$~bits \\
\hline
CHSH, B angles
  & active ($2\sqrt{2}$)
  & silent ($-1.2018$)
  & active ($\sqrt{2}{-}1$)
  & $0$ on $\ket{\Phi^+}$
  & $4$~bits \\
\hline
CHSH, E angles
  & silent ($1.5329$)
  & active ($+0.2309$)
  & silent ($0$)
  & $0$ on $\ket{\Phi^+}$
  & ${\approx}\ 0.9770$~bits \\
\end{tabular}
\end{ruledtabular}
\end{table*}

\section{KCBS scenario}
\label{sec:KCBS}

We examine the KCBS scenario~\cite{KCBS} in the spin-$1$ realization ($d=3$) and evaluate every indicator of Sec.~\ref{sec:indicators}. The state-dependent witnesses in Table~\ref{tab:state_dep} and the configuration-level quantities in Table~\ref{tab:state_indep} are read off the present computation; structural reasons are identified in Secs.~\ref{sec:KCBS_setup}--\ref{sec:KCBS_results}.

\subsection{Setup}
\label{sec:KCBS_setup}

Five dichotomic observables $\hat A_1,\ldots,\hat A_5$ on
$\mathcal{H}=\mathbb{C}^3$ are realized by~\cite{GunhanGedik2026}
\begin{equation}
  \hat A_\alpha
    = \hat{\mathbb{I}} - 2\ket{0_{\hat\ell_\alpha}}\!\bra{0_{\hat\ell_\alpha}},
  \qquad \alpha = 1,\ldots,5\ \pmod 5,
  \label{eq:A_KCBS}
\end{equation}
where $\ket{0_{\hat\ell_\alpha}}$ is the $m_s=0$ eigenstate of spin along direction $\hat\ell_\alpha$,
$\hat{\mathbf{S}}\cdot\hat\ell_\alpha$. The five unit vectors $\hat\ell_1,\ldots,\hat\ell_5$ lie on a cone of polar angle
\begin{equation}
  \theta_\mathrm{KCBS}
    = \arcsin\!\left(\frac{1}{\sqrt{2}\cos(\pi/10)}\right)
    \approx 48.0301^\circ,
  \label{eq:theta_KCBS}
\end{equation}
with azimuthal spacing $6\pi/5$, arranged so that
$\hat\ell_\alpha\cdot\hat\ell_{\alpha+1}=0$ for every $\alpha$ (mod~$5$) (Paper~I, Sec.~5). The angle $\gamma$ between next-nearest directions satisfies $\cos\gamma=(\sqrt{5}-1)/2$. The five overlapping contexts are
\begin{equation}
  \Gcal_\alpha
    = \{\hat A_{\alpha-1}, \hat A_\alpha, \hat A_{\alpha+1}\},
  \qquad \alpha = 1,\ldots,5\ \pmod 5,
\end{equation}
each with the structure of Sec.~\ref{sec:setting}: $\hat B_\alpha\equiv\hat A_\alpha$ is the shared observable commuting with its two outer partners.

Paper~I establishes in closed form the per-context value of $\MIE$~\cite{GunhanGedik2026} (see also SM~\cite{GunhanAksoyGedikSM}, Sec.~D.2):
\begin{equation}
  \MIE(\Gcal_\alpha)
    = \frac{3 - 4\cos^2\gamma + 4\cos^4\gamma}{3}
    \approx 0.6852,
  \label{eq:E_KCBS}
\end{equation}
uniform in $\alpha$ by pentagonal symmetry. Hence
\begin{equation}
  S_2(\Gcal_\alpha) = -\log_2 \MIE(\Gcal_\alpha) \approx 0.5453,
  \label{eq:S2_per_KCBS}
\end{equation}
and by the additive exactness criterion of Sec.~\ref{sec:multi-context} (derivation in SM~\cite{GunhanAksoyGedikSM}, Sec.~C)---the duplicated shared pairs
$(\ket{0_\alpha}\!\bra{0_\alpha},
\ket{0_{\alpha+1}}\!\bra{0_{\alpha+1}})$ contribute zero by cyclic orthogonality---the multi-context value is exact:
\begin{equation}
  S_2(\Gcal_\mathrm{KCBS})
    = 5\,S_2(\Gcal_\alpha)
    \approx 2.7266\ \text{bits}.
  \label{eq:S2_total_KCBS}
\end{equation}

\subsection{State family}
\label{sec:KCBS_state}

We probe the mixing family
\begin{equation}
  \hat\rho(p) = p\,\ket{0_z}\!\bra{0_z} + (1-p)\,\frac{\hat{\mathbb{I}}}{3},
  \qquad p \in [0,1],
  \label{eq:rho_p_KCBS}
\end{equation}
where $\ket{0_z}$ is the $m_s=0$ eigenstate along the symmetry axis (chosen to be the $z$-axis) of the KCBS cone. The family interpolates between the maximally mixed state ($p=0$) and the state that maximizes the standard KCBS violation ($p=1$). Since $S_2(\Gcal_\mathrm{KCBS})$ does not depend on $\hat\rho$, the parameter $p$ probes the state dependence of the four witnesses in Table~\ref{tab:state_dep} while leaving $S_2>0$ fixed.

The KCBS correlator is affine in $p$: $\chi(\hat\rho(0)) = -5/3$, and $\chi(\hat\rho(1)) \equiv \chi(\ket{0_z}) = 5 - 4\sqrt{5}$ (SM~\cite{GunhanAksoyGedikSM}, Sec.~D.5), so the noncontextual bound $\chi \ge -3$ is violated iff $p > p_\star$, with
\begin{equation}
  p_\star = \frac{3\sqrt{5}+5}{20} \approx 0.5854
  \label{eq:p_star}
\end{equation}
(derivation in SM~\cite{GunhanAksoyGedikSM}, Sec.~D.4). The contextual fraction~\cite{AbramskyBarbosaMansfield2017} vanishes for $p\le p_\star$ and agrees with the closed form $\CF(\ket{0_z}) = 2(\sqrt{5}-2)\approx 0.4721$ at $p=1$ (SM~\cite{GunhanAksoyGedikSM}, Sec.~D.6); by the general relation between $\chi$ and $\CF$ in Ref.~\cite{AbramskyBarbosaMansfield2017}, the onset of CF and the KCBS violation coincide at $p_\star$.

The operational quantity of Paper~I,
$D(\Gcal_\alpha,\hat\rho) =
|\langle[\hat A_{\alpha-1},\hat A_{\alpha+1}]\rangle_{\hat\rho}|$,
vanishes identically on $\hat\rho(p)$: the KCBS observables
are time-reversal even and every $\hat\rho(p)$ is time-reversal
invariant, so the commutator expectation is both real and
purely imaginary, hence zero. $D$ is nonzero on TR-broken
states; for instance,
$D(\ket{+1_z},\Gcal) \approx 6.4984$, about $67\%$ of the
global operator-norm bound $5\cdot 4\sqrt{\sqrt{5}-2} \approx
9.7174$ established in Paper~I (details in SM~\cite{GunhanAksoyGedikSM},
Sec.~D.8). Figure~\ref{fig:D_states} displays $D$ on a
one-parameter pure-state sweep interpolating
$\ket{0_z}$ to $\ket{+1_z}$ and on six representative
states.

\subsection{The shared-eigenstate mechanism}
\label{sec:shared_eigenstate}

The triviality of all $c_\MU$-based bounds in the KCBS scenario
has a structural origin that applies within a single context
and depends only on the commutation structure, not on the
quantum state.

\begin{proposition}
\label{prop:shared_eigenstate}
Let $\Gcal_\alpha = \{\hat A_{\alpha-1}, \hat A_\alpha,
\hat A_{\alpha+1}\}$ be a KCBS context as defined in
Sec.~\ref{sec:KCBS_setup}, and let $\ket{0_\alpha}$ denote
the $m_s=0$ eigenstate of $\hat{\mathbf{S}}\cdot\hat\ell_\alpha$. Then $\ket{0_\alpha}$ is a joint eigenstate of both $(\hat A_{\alpha-1},\hat A_\alpha)$
and $(\hat A_{\alpha+1},\hat A_\alpha)$, and consequently
$c_\MU(\Gcal_\alpha) = 1$.
\end{proposition}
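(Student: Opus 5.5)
The plan is to prove the proposition in two steps. First, identify the joint eigenbases of the two pairs explicitly and show that both contain $\ket{0_\alpha}$. Second, read off $c_\MU=1$ from the definition~\eqref{eq:cMU}, since that definition takes a maximum over rank-$1$ constituents.

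\emph{Joint eigenstate.} The key fact is a spin-$1$ identity: for orthogonal unit vectors $\hat\ell,\hat\ell'$ one has $\hat{\mathbf S}\cdot\hat\ell'\ket{0_{\hat\ell}}\perp\ket{0_{\hat\ell}}$. More to the point, $\ket{0_{\hat\ell}}\perp\ket{0_{\hat\ell'}}$. This is easiest to see in the Cartesian basis $\{\ket{x},\ket{y},\ket{z}\}$ of $\mathbb{C}^3$, where $(S_k)_{ij}=-i\epsilon_{kij}$ and the $m_s=0$ eigenvector of $\hat{\mathbf S}\cdot\hat n$ is the real vector $\hat n$ itself. Then $\braket{0_{\hat\ell}|0_{\hat\ell'}}=\hat\ell\cdot\hat\ell'$.

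With cyclic orthogonality $\hat\ell_\alpha\cdot\hat\ell_{\alpha\pm1}=0$ from Sec.~\ref{sec:KCBS_setup}, this gives $\braket{0_\alpha|0_{\alpha\pm1}}=0$. Hence $\hat A_{\alpha\pm1}\ket{0_\alpha}=\ket{0_\alpha}$ and $\hat A_\alpha\ket{0_\alpha}=-\ket{0_\alpha}$ by~\eqref{eq:A_KCBS}. The same orthogonality shows that the rank-$1$ projectors in~\eqref{eq:A_KCBS} commute, so the pairs $(\hat A_{\alpha-1},\hat A_\alpha)$ and $(\hat A_{\alpha+1},\hat A_\alpha)$ are compatible, as the setting of Sec.~\ref{sec:setting} requires. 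It follows that $\ket{0_\alpha}$ is a joint eigenvector of both pairs, with eigenvalue pair $(+1,-1)$ in each.

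Next I would write out the joint eigenspaces. For $(\hat A_{\alpha-1},\hat A_\alpha)$ they are:
\begin{itemize}
\item $\mathrm{span}\{\ket{0_\alpha}\}$ with eigenvalues $(+1,-1)$;
\item $\mathrm{span}\{\ket{0_{\alpha-1}}\}$ with eigenvalues $(-1,+1)$;
\item the one-dimensional complement with eigenvalues $(+1,+1)$.
\end{itemize}
The second pair has the analogous structure. In particular, the eigenvalue-$(\,\cdot\,,-1)$ sector of the shared observable $\hat B_\alpha=\hat A_\alpha$ is the same ray in both families. Thus $\hat P_i=\hat Q_j=\ket{0_\alpha}\!\bra{0_\alpha}$ for the corresponding labels.

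\emph{Conclusion.} Both families are rank-$1$ here, so~\eqref{eq:cMU} applies directly. Taking $i,j$ to be the labels above gives $|\langle a_i,b_i|c_j,b_j\rangle|=|\braket{0_\alpha|0_\alpha}|=1$. Every amplitude overlap of normalized vectors is at most $1$ by Cauchy--Schwarz, so the maximum equals $1$, that is, $c_\MU(\Gcal_\alpha)=1$.

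The main obstacle is a phase and convention check: the paper defines $\ket{0_{\hat\ell}}$ via the $m_s=0$ eigenstate in the standard $\ket{m}$ basis, so one must confirm that orthogonality of directions implies orthogonality of the $m_s=0$ states. I would settle this with the Cartesian-basis argument above, using that $\ket{0_{\hat n}}$ is $\hat n$ up to a phase. An alternative is a rotation argument: rotate so that $\hat\ell=\hat z$ and $\hat\ell'=\hat x$, and check $\braket{0_z|0_x}=d^{1}_{00}(\pi/2)=\cos(\pi/2)=0$.
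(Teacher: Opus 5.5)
Your proof is correct and follows the paper's argument. You show $\hat A_\alpha\ket{0_\alpha}=-\ket{0_\alpha}$, use cyclic orthogonality to get $\hat A_{\alpha\pm1}\ket{0_\alpha}=\ket{0_\alpha}$, conclude that $\ket{0_\alpha}\!\bra{0_\alpha}$ lies in both rank-$1$ families, and note that the unit overlap attains the maximum in~\eqref{eq:cMU}. Your Cartesian-basis identity $|\braket{0_{\hat\ell}|0_{\hat\ell'}}|=|\hat\ell\cdot\hat\ell'|$ is a useful addition: it makes explicit the step the paper only asserts, that orthogonal directions give orthogonal $m_s=0$ states, and it settles your phase/convention worry because inner products do not depend on the basis.
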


\begin{proof}
By~\eqref{eq:A_KCBS}, $\hat A_\alpha\ket{0_\alpha}
= -\ket{0_\alpha}$. Since $\hat\ell_{\alpha-1}\cdot\hat\ell_\alpha
= \hat\ell_{\alpha+1}\cdot\hat\ell_\alpha = 0$, the states
$\ket{0_{\alpha\pm 1}}$ are orthogonal to $\ket{0_\alpha}$, and
hence $\hat A_{\alpha\pm 1}\ket{0_\alpha}=\ket{0_\alpha}$. Thus
$\ket{0_\alpha}\bra{0_\alpha}$ appears as the same rank-$1$ projector in
both joint eigenbases used to build $\Tcal$ for
$\Gcal_\alpha$. The corresponding amplitude overlap is
$|\braket{0_\alpha|0_\alpha}|=1$, which attains the maximum in the definition of $c_\MU$; hence $c_\MU(\Gcal_\alpha)=1$.
\end{proof}

As an immediate consequence, every entropic-uncertainty bound of the Maassen--Uffink form~\cite{MaassenUffink1988} and its quantum-memory extension by Berta \textit{et al.} \cite{Berta2010}, whose right-hand side is proportional to $-\log_2 c_\MU$, is trivially satisfied in the KCBS scenario for every quantum state: the bound degenerates to zero.

The Shannon-entropic $n$-cycle contextuality inequality of
Chaves and Fritz~\cite{ChavesFritz2012,Chaves2013},
\begin{equation}
  \mathrm{BC}_n^k
    = H(X_k X_{k+1}) + \!\!\sum_{j\neq k,k+1}\!\! H(X_j)
    - \sum_{j\neq k} H(X_j X_{j+1}) \le 0
  \label{eq:ChavesBC}
\end{equation}
for all $k$, is satisfied with strict margin everywhere along the family $\hat\rho(p)$: numerically
$\max_k \mathrm{BC}_5^k(\hat\rho(p))\le -1.1667$~bits (tabulation
in SM~\cite{GunhanAksoyGedikSM}, Sec.~D.7), the bound attained at
$\hat\rho=\ket{0_z}\!\bra{0_z}$ and further below it as $p$
decreases, more than one bit below the noncontextual
threshold. The same conclusion holds for the entropic
KCBS test introduced earlier by Kurzy\'nski \emph{et
al.}~\cite{Kurzynski2012}: entropic violation in the KCBS
scenario is known to require a specialized state and
measurement choice outside $\hat\rho(p)$, with maximum
reported violation $\approx 0.091$ bits.
In the spin-$1$ realization of the pentagon, the
two-dimensional multiplicity of the $+1$ eigenspace of each
$\hat A_\alpha$ and the cyclic orthogonality
$\braket{0_\alpha|0_{\alpha+1}}=0$ together keep both
inequalities well below saturation.

The mechanism---$c_\MU=1$ in every context---is not specific to $n=5$. The odd-$n$-cycle generalization of the KCBS scenario was introduced independently by Liang, Spekkens, and Wiseman~\cite{LiangSpekkensWiseman2011} and by Cabello, Severini, and Winter~\cite{CabelloSeveriniWinter2014}, with the corresponding noncontextuality inequalities shown to be tight in Ref.~\cite{AraujoQuintinoBudroniCunhaCabello2013}. In every odd-$n$-cycle realized in the same spin-$1$ form, adjacent vectors are orthogonal by construction, so Proposition~\ref{prop:shared_eigenstate} applies unchanged: $c_\MU(\Gcal_\alpha)=1$ in every context. We confirmed this numerically for $n\in\{5,7,9,11,13,15\}$ (SM~\cite{GunhanAksoyGedikSM}, Sec.~D.10).

\emph{Principal-angle structure.}~~By Paper~I~\cite[Eq.~(7)]{GunhanGedik2026}, each per-pair overlap admits the geometric decomposition $\Tcal_{ij} = d^{-1}\sum_\mu \cos^4\theta_\mu^{(ij)}$, where $\theta_\mu^{(ij)}$ are the principal angles between the two joint eigenspaces~\cite{BjorckGolub1973}. 

In the spin-1 realization, each $\hat A_\alpha=\hat{\mathbb{I}}-2\ket{0_\alpha}\bra{0_\alpha}$ has eigenvalue $-1$ on the one-dimensional span of $\ket{0_\alpha}$ and eigenvalue $+1$ on its two-dimensional orthogonal complement. Within a context, the central observable's $+1$-eigenspace is the same two-dimensional subspace for both neighbor pairs, but each neighbor decomposes it into a different pair of one-dimensional joint eigenspaces; cyclic orthogonality $\braket{0_\alpha|0_{\alpha+1}}=0$ additionally forces the $(-1,-1)$ joint eigenspace to be empty. Hence every joint-eigenspace projector is rank-1, and each pair $(\hat{P}_i,\hat{Q}_j)$ contributes a single principal angle; each context therefore yields nine angles in total. These take only four distinct values,
\begin{equation}
  \theta \in \Bigl\{0,\ \arccos\tfrac{1}{\sqrt{\varphi}},\
    \arccos\tfrac{1}{\varphi},\ \tfrac{\pi}{2}\Bigr\},
  \label{eq:KCBS_angles}
\end{equation}
with multiplicities $(1,2,2,4)$, where $\varphi=(1+\sqrt{5})/2$ is the golden ratio. The unique $\theta=0$ entry is the geometric signature of Proposition~\ref{prop:shared_eigenstate}---the shared $|0_\alpha\rangle$ eigenstate forces $c_\MU=1$---and the four $\theta=\pi/2$ entries register the orthogonal-subspace structure enforced by cyclic orthogonality $\braket{0_\alpha|0_{\alpha\pm 1}}=0$. The two intermediate angles are supplementary, $\arccos(1/\sqrt{\varphi}) + \arccos(1/\varphi) = \pi/2$, a structural consequence of the pentagonal geometry.

\subsection{Results}
\label{sec:KCBS_results}

\begin{figure*}[htb]
\includegraphics[width=\textwidth]{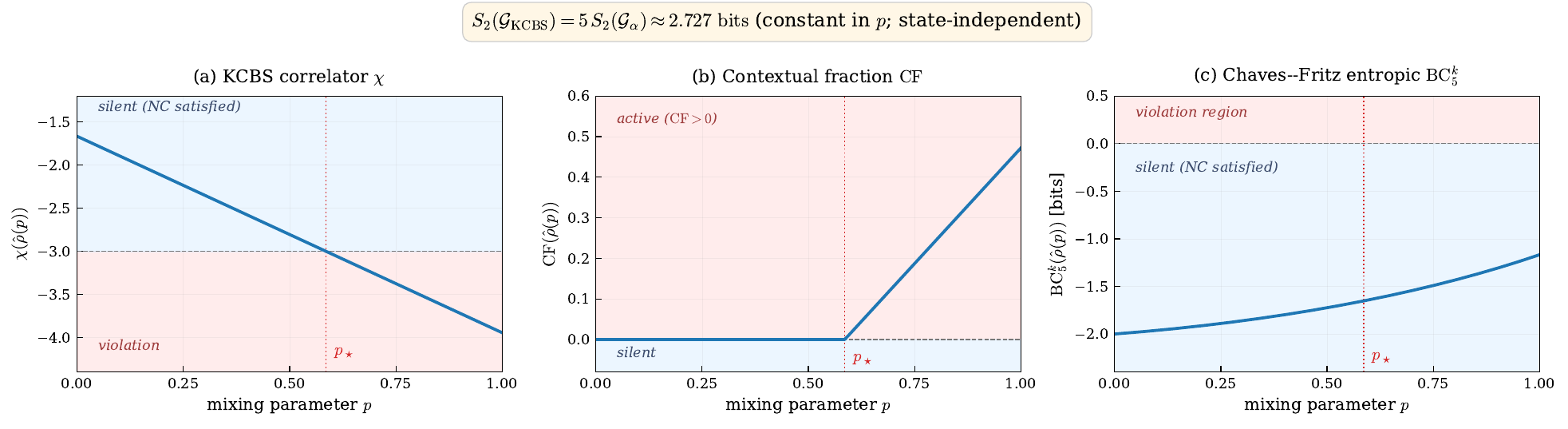}
\caption{Three of the four state-dependent witnesses of
Table~\ref{tab:state_dep} evaluated on the KCBS mixing family
$\hat\rho(p) = p\ket{0_z}\!\bra{0_z} + (1-p)\hat{\mathbb{I}}/3$,
with threshold $p_\star = (3\sqrt{5}+5)/20 \approx 0.5854$
(dotted vertical line in each panel).
(a)~The KCBS correlator $\chi$ is affine in $p$ and violates
the noncontextual bound $\chi \ge -3$ for $p > p_\star$.
(b)~The contextual fraction $\CF$ vanishes for
$p \le p_\star$ and rises to $2(\sqrt{5}-2) \approx 0.4721$
at $p=1$; its onset coincides with that of $\chi$, as
established in Ref.~\cite{AbramskyBarbosaMansfield2017}.
(c)~The Chaves--Fritz entropic
inequality~\eqref{eq:ChavesBC} is silent throughout,
$\mathrm{BC}_5^k \le -1.1667$~bits for all $p$, more than one
bit below the noncontextual threshold.
In the silent region $p \le p_\star$ (left of the dotted
vertical line), all three state-dependent witnesses are
simultaneously silent, while the configuration-level
quantity $S_2(\Gcal_\mathrm{KCBS}) \approx 2.7266$~bits
(header) is positive throughout, by an amount determined
by the projector geometry of the pentagon. The operational
witness $D(\Gcal_\mathrm{KCBS}, \hat\rho(p))$ of
Paper~I~\cite{GunhanGedik2026} vanishes identically on this
family by time-reversal invariance of $\hat\rho(p)$; its
behavior on time-reversal-broken states is displayed in
Fig.~\ref{fig:D_states}.}
\label{fig:kcbs_witnesses}
\end{figure*}

\begin{figure}[t]
\includegraphics[width=0.8\columnwidth]{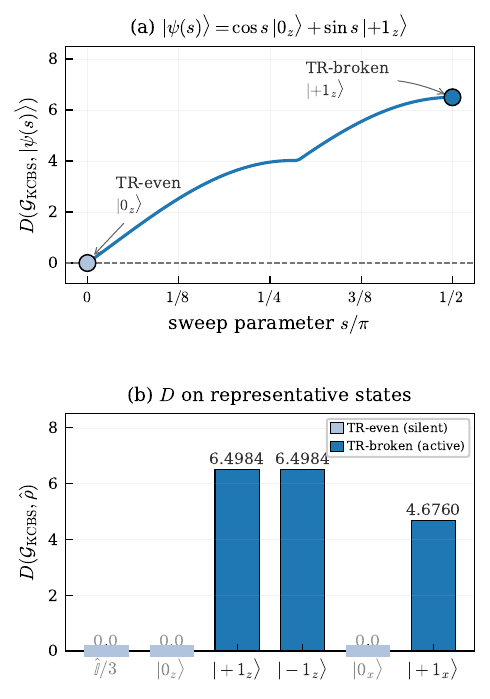}
\caption{Operational quantity $D(\Gcal_\mathrm{KCBS}, \hat\rho)$ of Paper~I~\cite{GunhanGedik2026} on states outside the time-reversal-invariant family $\hat\rho(p)$ of Fig.~\ref{fig:kcbs_witnesses}, where $D \equiv 0$. Throughout, ``TR'' abbreviates time reversal; a state is TR-even if it is invariant under the antiunitary time-reversal operation, and TR-broken otherwise. (a)~Pure-state sweep $\ket{\psi(s)} = \cos s\,\ket{0_z} + \sin s\,\ket{+1_z}$. The endpoints are qualitatively distinct: $s=0$ is the TR-even state $\ket{0_z}$ (open circle, $D = 0$) and $s=\pi/2$ is the TR-broken state $\ket{+1_z}$ (filled circle, $D \approx 6.4984$). Interior values $0 < s < \pi/2$ correspond to superpositions of the two endpoints and are neither purely TR-even nor purely TR-broken; $D$ rises across the sweep, with a mild non-monotonicity reflecting the absolute-value structure of $D$ as a sum of $|\langle[\hat A,\hat C]\rangle|$ terms.
(b)~$D$ on six representative states. $D$ vanishes on the maximally mixed state $\hat{\mathbb{I}}/3$ and on the TR-even spin-zero states $\ket{0_z}$ and $\ket{0_x}$; $D \approx 6.4984$ on $\ket{+1_z}$ and on $\ket{-1_z}$, and $D \approx 4.6760$ on $\ket{+1_x}$. The operational witness is sensitive to time-reversal breaking in a way complementary to $\chi$, $\CF$, and $\mathrm{BC}_5^k$, which all depend on properties that are invariant under time reversal.}
\label{fig:D_states}
\end{figure}

Figure~\ref{fig:kcbs_witnesses} displays three of the four
state-dependent witnesses on $\hat\rho(p)$, and
Fig.~\ref{fig:D_states} shows the operational quantity $D$
on a separate pure-state family. The separation is not
editorial: because $D \equiv 0$ on the TR-invariant family
$\hat\rho(p)$, displaying $D$ as a fourth panel of
Fig.~\ref{fig:kcbs_witnesses} would show a flat line at
zero and carry no information. Activating $D$ requires
moving off $\hat\rho(p)$ to TR-broken states. Conversely,
the three witnesses in Fig.~\ref{fig:kcbs_witnesses} vary
along the mixing parameter $p$ of the KCBS-maximizing
state family, and no single one-parameter state family in
the KCBS spin-$1$ realization activates all four witnesses
simultaneously: the Chaves--Fritz entropic inequality is
structurally silent on every quantum state in the spin-1 KCBS realization considered here by Prop.~\ref{prop:shared_eigenstate} (shared $m_s=0$ eigenstate), and $D$ requires TR-breaking that is
incompatible with the KCBS-maximizing $\ket{0_z}$ state at
which $\chi$ and $\CF$ are maximal. The two figures thus
display the same quantum scenario through two
complementary cuts through its state space.
Table~\ref{tab:state_dep} records all four in summary form:
$\chi$ and $\CF$ are active iff $p>p_\star$; the
Chaves--Fritz entropic inequality~\eqref{eq:ChavesBC} is
silent everywhere on $\hat\rho(p)$ with margin exceeding one
bit; and $D$ vanishes identically on $\hat\rho(p)$ by
TR-invariance. Table~\ref{tab:state_indep} records the
configuration-level entries:
$\MIE(\Gcal_\alpha) = (3-4\cos^2\gamma+4\cos^4\gamma)/3
\approx 0.6852$,
$c_\MU = 1$, non-saturation $\MIE < c_\MU^2$, and
$S_2(\Gcal) \approx 2.7266$~bits by the exact additive
composition.

Three features summarize the KCBS entries of
Table~\ref{tab:landscape}.

\textit{(a) Coincident onset of $\chi$ and $\CF$.}  Both become
nontrivial at $p=p_\star$, an instance of the general
correspondence between Bell-type violation and contextual
fraction established in
Ref.~\cite{AbramskyBarbosaMansfield2017}.

\textit{(b) Two witnesses are silent for distinct reasons.}
$\mathrm{BC}_5^k$ is silent because of projector geometry
(Prop.~\ref{prop:shared_eigenstate}); $D$ is silent because of state-and-observable symmetry (TR invariance of
$\hat\rho(p)$). Both silences are robust within their
respective classes: $\mathrm{BC}_5^k$ is structurally silent on every state in the spin-1 KCBS realization considered here (not only on $\hat\rho(p)$), while $D$ is
activated by a change of state (e.g., $\ket{+1_z}$) but is
not altered by varying $p$.

\textit{(c) $S_2$ is the only indicator that records contextuality throughout the mixing interval.}  For $p\le p_\star$ all four state-dependent witnesses are silent simultaneously, while $S_2(\Gcal_\mathrm{KCBS})\approx 2.7266$~bits---a property of the projector geometry, not of the state. This is the KCBS instance of the general observation articulated in Sec.~\ref{sec:discussion}: a state-dependent silence does not entail absence of contextuality at the configuration level.

\section{CHSH scenario}
\label{sec:CHSH}

The CHSH $4$-cycle~\cite{CHSH1969} occupies a distinct position among the scenarios considered here. Unlike KCBS, entropic violation of $\mathrm{BC}_n^k$ \emph{is} quantum mechanically achievable on $\ket{\Phi^+}$ at a suitable choice of measurement angles, as shown by Chaves~\cite{Chaves2013}; at the standard Bell-optimal angles, by contrast, $\mathrm{BC}_4^k$ is comfortably silent while $\chi_\mathrm{CHSH}$ reaches its Tsirelson value. We use CHSH as an instance in which two state-dependent witnesses, $\chi_\mathrm{CHSH}$ and $\mathrm{BC}_4^k$, exhibit a \emph{complementary} pattern of activation across measurement choices, and in which $S_2$ remains positive in both regimes.

\subsection{Setup}
\label{sec:CHSH_setup}

We take the $4$-cycle realization on two qubits in the Bell
state $\ket{\Phi^+}=(\ket{00}+\ket{11})/\sqrt{2}$, with
observables
\begin{equation}
\begin{aligned}
  \hat A_1 &= \hat\sigma_{a_0}\otimes\hat{\mathbb{I}}, \quad &
  \hat A_2 &= \hat{\mathbb{I}}\otimes\hat\sigma_{b_0}, \\
  \hat A_3 &= \hat\sigma_{a_1}\otimes\hat{\mathbb{I}}, \quad &
  \hat A_4 &= \hat{\mathbb{I}}\otimes\hat\sigma_{b_1},
\end{aligned}
\label{eq:A_CHSH}
\end{equation}
where $\hat\sigma_t = \cos(t)\hat\sigma_z + \sin(t)\hat\sigma_x$
and each $\hat\sigma_{a_i}$ ($\hat\sigma_{b_j}$) is Alice's
(Bob's) equatorial-plane spin observable at angle $a_i$
($b_j$), expressed throughout in radians. Adjacent pairs in
the cycle commute automatically, as each contains one Alice
operator and one Bob operator.
Each context $\Gcal_\alpha$ is
$\{\hat A_{\alpha-1}, \hat A_\alpha, \hat A_{\alpha+1}\}$
with $\hat B_\alpha = \hat A_\alpha$, following the same
convention as in Sec.~\ref{sec:setting}.

For the $n=4$ cycle, the Abramsky--Brandenburger--Kishida analysis~\cite{AbramskyBrandenburger2011} shows that the noncontextual bound on the cyclic sum $\sum_\alpha \langle \hat A_\alpha \hat A_{\alpha+1}\rangle$ is trivial and a sign flip on exactly one term is required to obtain a non-degenerate inequality. We accordingly use the standard CHSH form
\begin{equation}
  \chi_\mathrm{CHSH}
  \equiv \langle \hat A_1 \hat A_2\rangle
    + \langle \hat A_2 \hat A_3\rangle
    - \langle \hat A_3 \hat A_4\rangle
    + \langle \hat A_4 \hat A_1\rangle,
  \label{eq:CHSH_chi}
\end{equation}
with noncontextual bound $\chi_\mathrm{CHSH} \le 2$, quantum Tsirelson bound $\chi_\mathrm{CHSH} \le 2\sqrt{2}$~\cite{CHSH1969,Tsirelson1980}, and no-signaling maximum $\chi_\mathrm{CHSH} \le 4$~\cite{PopescuRohrlich1994}. This sign-flip convention is specific to even $n$; the KCBS 5-cycle of Sec.~\ref{sec:KCBS} requires no sign flip.

\subsection{Two angle regimes}
\label{sec:CHSH_regimes}

Two angle choices---both previously studied in the literature---serve as the representative extremes (see SM~\cite{GunhanAksoyGedikSM}, Sec.~E.2).

\textit{Bell-optimal angles.}~~
$(a_0, b_0, a_1, b_1) = (0,\,\pi/4,\,\pi/2,\,-\pi/4)$: Alice measures in directions $(0, \pi/2)$ and Bob in $(\pi/4, -\pi/4)$, the standard assignment that maximizes the CHSH correlator. On $\ket{\Phi^+}$ this gives $\chi_\mathrm{CHSH} = 2\sqrt{2}$, saturating the Tsirelson bound~\cite{Tsirelson1980}.

\textit{Entropic-optimal angles.}~~$(a_0, b_0, a_1, b_1) \approx (0,\,0.916,\,0.524,\,-2.880)$: the angle set identified in the analysis of Ref.~\cite{Chaves2013} as producing a quantum violation of the Chaves--Fritz entropic inequality~\eqref{eq:ChavesBC}. On $\ket{\Phi^+}$ this gives $\chi_\mathrm{CHSH} \approx 1.5329$ (within the
noncontextual/local bound $\le 2$) and $\mathrm{BC}_4^k \approx +0.2309$~bits (entropic violation).

Per-cell derivations for each regime are in SM~\cite{GunhanAksoyGedikSM},
Sec.~E.3. Table~\ref{tab:state_dep} records the
state-dependent witness values; Table~\ref{tab:state_indep}
records the configuration-level quantities.

\subsection{Results}
\label{sec:CHSH_results}

Three features summarize the CHSH entries of
Tables~\ref{tab:state_dep}--\ref{tab:landscape}.

\textit{(a) Complementary activation of $\chi_\mathrm{CHSH}$ 
and $\mathrm{BC}_4^k$.}  At Bell-optimal angles the CHSH 
correlator $\chi_\mathrm{CHSH}$ reaches the Tsirelson bound 
$2\sqrt{2}$ while $\mathrm{BC}_4^k=-1.2018$~bits, comfortably below the noncontextual bound; at entropic-optimal angles 
$\mathrm{BC}_4^k$ is active by $0.2309$~bits while 
$\chi_\mathrm{CHSH} \approx 1.5329$ retreats within the 
noncontextual bound $\chi_\mathrm{CHSH} \le 2$. This 
complementarity---identified by 
Chaves~\cite{Chaves2013}---reflects the fact that the 
maximally nonlocal distribution on the 4-cycle is 
entropically equivalent to a classically correlated 
distribution, and that entropic violation requires a 
distinct angle choice. The contextual 
fraction~\cite{AbramskyBarbosaMansfield2017} tracks 
$\chi_\mathrm{CHSH}$: $\CF = \sqrt{2}-1 \approx 0.4142$ at 
Bell-optimal angles and $\CF = 0$ at entropic-optimal angles 
(since $\chi_\mathrm{CHSH} < 2$ there).  As in the KCBS 
scenario, $\chi_\mathrm{CHSH}$ and $\CF$ encode the same 
empirical content; an entropic witness ($\mathrm{BC}_4^k$) 
is required to detect the entropic-optimal regime.

\textit{(b) $D$ vanishes on $\ket{\Phi^+}$ independent of angle.}~~The commutator $[\hat A_{\alpha-1},\hat A_{\alpha+1}]$ in each context is pure-imaginary Hermitian, and $\ket{\Phi^+}$ is invariant under the combined time-reversal of both qubits; hence
$\langle [\hat A_{\alpha-1},\hat A_{\alpha+1}]
\rangle_{\ket{\Phi^+}} = 0$ identically, in both regimes. This is the CHSH analogue of $D$'s silence on $\hat\rho(p)$ in the KCBS scenario; the structural origin is the same (TR-invariance).

\textit{(c) $S_2 > 0$ and the saturation dichotomy.}
$S_2(\Gcal_\mathrm{CHSH})$ is positive at both angle
choices: $4$~bits at Bell-optimal angles
and $\approx 0.9770$~bits at entropic-optimal angles. In the
Bell-optimal regime, the overlap matrix $\Tcal$ saturates
the bound $\MIE = c_\MU^2$~\eqref{eq:E_leq_cMU2}, with $\MIE = c_\MU^2 = 1/2$. In the entropic-optimal regime,
$\MIE(\Gcal_\alpha) < c_\MU(\Gcal_\alpha)^2$ in every context (e.g., $\MIE = 0.8748 < 0.9329 = c_\MU^2$ for $\Gcal_{2,4}$): the bound is not saturated. Thus the CHSH 4-cycle itself realizes both structural regimes---saturation and non-saturation of $E \le c_\MU^2$---within a single configuration, through the choice of measurement angles.

\textit{(d) Additive exactness by distinct-bases mechanism.}
The totals above decompose across the four contexts as additive sums. At Bell-optimal angles every context is
symmetric, $\MIE(\Gcal_\alpha) = 1/2$ for
$\alpha=1,\ldots,4$, giving
$S_2(\Gcal_\alpha) = 1$~bit and total $4$~bits. At entropic-optimal angles the four contexts split into two
tiers by the Alice/Bob alternation,
$\MIE(\Gcal_1) = \MIE(\Gcal_3) \approx 0.8147$ and
$\MIE(\Gcal_2) = \MIE(\Gcal_4) \approx 0.8748$, giving
per-context values $S_2(\Gcal_{1,3}) \approx 0.2956$~bits and
$S_2(\Gcal_{2,4}) \approx 0.1929$~bits, with total
$2(0.2956) + 2(0.1929) \approx 0.9770$~bits. The additive composition~\eqref{eq:S2_multi} is exact in
both regimes by condition~(i) of
Sec.~\ref{sec:multi-context}: each context $\Gcal_\alpha$ uses a distinct triple of angles $(a_i,b_j,a_k)$ or $(b_i,a_j,b_k)$ from the alternating Alice/Bob structure, so the joint-eigenspace projectors $\hat P_i, \hat Q_j$---built as tensor products of one Alice and one Bob eigenstate---differ from those of any other context. No ordered pair $(\hat P_i, \hat Q_j)$ recurs across contexts; mechanism~(i) applies. This is distinct from the KCBS mechanism,
where duplicated pairs do occur but contribute zero by
cyclic orthogonality; the CHSH 4-cycle instead avoids
duplication altogether.

The CHSH scenario thus realizes, on a single family of
observables, the same pattern observed across scenarios: every state-dependent witness in Table~\ref{tab:state_dep} (column-by-column) admits an angle regime at which it is silent---yet
$S_2(\Gcal_\mathrm{CHSH})$ remains positive throughout.

\section{Discussion}
\label{sec:discussion}

The results of
Sec.~\ref{sec:KCBS}--\ref{sec:CHSH} admit a compact reading
in terms of the overlap matrix $\Tcal$ and its scalar
contractions.

\subsection{Configuration-level versus state-dependent
  contextuality}
\label{sec:discussion_structure}

The three scenarios differ sharply at the level of the
state-dependent witnesses in Table~\ref{tab:state_dep}, but
agree uniformly at the configuration level: in each case
$S_2(\Gcal) > 0$, by a value determined by the projector
geometry alone.

\emph{KCBS} occupies the regime in which every
$c_\MU$-based witness is trivial: the shared eigenstate
$\ket{0_\alpha}$ (Prop.~\ref{prop:shared_eigenstate})
forces $c_\MU = 1$ in every context, so Maassen--Uffink,
Berta, and the Chaves--Fritz Shannon-entropic inequality
all degenerate or remain far from saturation. The
correlator $\chi$ and the contextual fraction $\CF$
become active only for $p > p_\star$; the operational
quantity $D$ vanishes identically on the mixing family
$\hat\rho(p)$ by TR-invariance. For $p \le p_\star$ every
state-dependent witness is silent, while
$S_2(\Gcal_\mathrm{KCBS}) \approx 2.7266$~bits is positive
by exact additive composition over the five contexts.

\emph{CHSH} realizes, within a single configuration, the
complementarity between the standard CHSH correlator and
the entropic $\mathrm{BC}_4^k$ inequality identified by
Chaves~\cite{Chaves2013}: at Bell-optimal angles $\chi_\mathrm{CHSH} = 2\sqrt{2}$ is active while $\mathrm{BC}_4^k$ is silent; at entropic-optimal angles these roles reverse. The contextual fraction tracks $\chi_\mathrm{CHSH}$ and is active in the Bell regime but silent in the entropic regime, so no state-dependent witness examined here is simultaneously active throughout the angle sweep. The two regimes also differ on the saturation $\MIE = c_\MU^2$: Bell-optimal angles saturate the bound exactly while entropic-optimal angles do not. CHSH thus realizes, within a single 4-cycle configuration, both structural regimes of the bound $E \le c_\MU^2$.

What unifies the two scenarios is the contrapositive of
Prop.~\ref{prop:necessary}: a nonzero configuration-level
$S_2$ is necessary for the existence of a quantum state
that activates any contextuality witness. The converse
is not available; both scenarios realize regimes in which
$S_2 > 0$ coexists with silent state-dependent witnesses
(KCBS at $p \le p_\star$; silent cells of
Table~\ref{tab:state_dep} for CHSH). In this sense $S_2$
is the weakest positive indicator: active whenever anything
else is, but also active in settings where nothing else is.

\subsection{\texorpdfstring{$D$}{D} versus \texorpdfstring{$S_2$}{S2}: operational commutator vs.\
  projector geometry}
\label{sec:discussion_D_S2}

The operational quantity $D$ of
Paper~I~\cite{GunhanGedik2026} and the configuration-level
$S_2$ of the present work complement each other in a
concrete way visible across the scenarios considered. Both
derive from commutator structure, but through different
ingredients: $D$ through expectation values of the outer
commutators $[\hat A_{\alpha-1}, \hat A_{\alpha+1}]$ on a
specific quantum state, and $S_2$ through the squared
Hilbert--Schmidt norms $\|[\hat P_i, \hat Q_j]\|_\HS^2$
of the joint-projector commutators, summed over all
projector pairs (Eq.~\eqref{eq:S2_comm_form}).

In KCBS, $D$ probes an operational signature that is
sensitive to state preparation but silent on the
TR-symmetric $\hat\rho(p)$ family; $S_2$ is insensitive
to state preparation entirely and reflects the pentagon
geometry. In CHSH, $D$ vanishes identically on
$\ket{\Phi^+}$ at both angle regimes by the same
TR-invariance mechanism, while $S_2$ discriminates between
the Bell-optimal and entropic-optimal regimes ($4$~bits
versus ${\approx}\ 0.9770$~bits). The two quantities are not
redundant: Paper~I's $D$ gives an experimentally measurable
quantity whose value depends on $\hat\rho$; Paper~II's
$S_2$ gives a state-independent classifier of the
configuration.

\subsection{Limitations}
\label{sec:discussion_limitations}

The framework presented here has bounds that we state explicitly.

\begin{enumerate}
    \item[(i)] $S_2$ is not the unique scalar contraction of
    $\Tcal$. The logarithmic form was chosen for its
    additive composition across independent contexts
    (Sec.~\ref{sec:multi-context}) and its clean commutator
    expression (Sec.~\ref{sec:properties}). Other
    contractions, such as $\ell_p$-norms and spectral
    invariants of $\Tcal$, are likely to probe aspects of the
    projector geometry not seen by $S_2$; a comparative
    study is outside the scope of this work.
    
    \item[(ii)] The additive composition~\eqref{eq:S2_multi} across
multiple contexts is verified in the KCBS scenario, for
the reasons stated in Sec.~\ref{sec:multi-context} and
verified in Sec.~\ref{sec:KCBS_setup}. Whether there
exist KS-contextual scenarios in which the additive
composition strictly overcounts the deduplicated
joint-overlap sum is an open question noted in
Sec.~\ref{sec:multi-context} and listed in
Sec.~\ref{sec:outlook}.

    \item[(iii)] The coarse-graining non-increase of Prop.~\ref{prop:monotone} establishes monotonicity of $S_2$ under a restricted class of operations: coarse-graining of the joint-eigenspace projector families. The broader class of noncontextual wirings~\cite{AmaralNCW2018,Amaral2019RT} includes transformations not covered by the present proof; whether $S_2$ is monotone under the full class remains open.

    \item[(iv)] The four state-dependent witnesses compared here ($\chi$, $\CF$, $\mathrm{BC}_n^k$, $D$) do not exhaust the landscape of contextuality measures. Noise-robust inequalities (e.g., the Contextuality-by-Default framework of Dzhafarov and Kujala~\cite{DzhafarovKujala2016}), memory-based uncertainty relations~\cite{Berta2010,Coles2017}, and distance-based contextuality measures~\cite{GrudkaHorodecki2014,AmaralNCW2018} all carry further information. The comparison carried out here is limited to the four witnesses listed in Sec.~\ref{sec:indicators}, chosen to span outcome-correlation ($\chi$), Shannon-entropic ($\mathrm{BC}_n^k$), linear-programming ($\CF$), and operational ($D$) perspectives.
    
    \item[(v)] The Chaves--Fritz inequality applies only to $n$-cycle configurations. State-independent contextuality scenarios outside the $n$-cycle family---such as the Peres--Mermin square or the Yu--Oh configuration---lie beyond the reach of both the Chaves--Fritz entropic witness and the present projector-geometric framework, which depends on the binary structure of a shared observable in each context. Extending entropic or projector-geometric witnesses to these configurations would require additional structure (e.g., algebraic parity obstructions or compatibility-graph data) beyond the scalar contractions of $\Tcal$.
\end{enumerate}

\section{Outlook}
\label{sec:outlook}

Five directions suggest themselves, in order of proximity to the analysis presented here.

\textit{(a) Other scalar contractions of $\Tcal$.}~~The
$\ell_p$-norms $\sum_{ij}\Tcal_{ij}^p$ for $p \neq 2$,
spectral invariants of $\Tcal$ treated as a matrix, and
quantities derived from row-normalizations of $\Tcal$ all
deserve study. Different choices may yield quantities
that either share the monotonicity established in
Sec.~\ref{sec:coarse} or diverge from it; the comparison
would refine the hierarchy of contractions sketched in
Sec.~\ref{sec:overlap}.

\textit{(b) Entropic uncertainty relations involving
$\MIE$.}~~The triviality of $c_\MU$ in the KCBS scenario
(Prop.~\ref{prop:shared_eigenstate}) removes the content
of every Maassen--Uffink-type bound. A Shannon-entropic
uncertainty relation in which one side is
expressed through $\MIE$ rather than $c_\MU$ would not
degenerate in this scenario and could produce nontrivial
bounds where the current framework gives none. The
extension to quantum side information in the manner of
Berta~\emph{et al.}~\cite{Berta2010} is a concrete target.

\textit{(c) The odd-$n$-cycle family.}~~The structural
argument of Prop.~\ref{prop:shared_eigenstate} yields
$c_\MU(\Gcal_\alpha) = 1$ in every context of every odd-$n$
cycle realized in the spin-$1$ form of
Sec.~\ref{sec:KCBS_setup}. An analytic description of
$S_2(\Gcal_n)$ at finite $n$ remains open, although the asymptotic behavior $S_2(\Gcal_n)\sim 1/n^2$ is straightforward (SM, Sec~D.10).

\textit{(d) Noncontextual wirings as free operations for
$S_2$.}~~Establishing or refuting monotonicity of $S_2$
under the full class of noncontextual
wirings~\cite{AmaralNCW2018,Amaral2019RT} would clarify
whether $S_2$ admits a resource-theoretic interpretation
beyond the coarse-graining restriction established here.
A related question concerns the relationship between
$S_2$ and the resource-theoretic monotones of
Ref.~\cite{Amaral2019RT}.

\textit{(e) Additive excess and state-independent
scenarios.}~~The additive
composition~\eqref{eq:S2_multi} is verified as exact in
KCBS by cyclic orthogonality
(Sec.~\ref{sec:multi-context}). Whether there exist
KS-contextual scenarios in which the additive sum
strictly overcounts the deduplicated joint-overlap sum
remains open. A natural setting for this question is the
class of state-independent contextuality scenarios, such
as the Peres--Mermin square~\cite{Peres1990,Mermin1993} or
the Yu--Oh configuration~\cite{YuOh2012}, where the KS
obstruction is an algebraic parity or coloring condition
organized on a combinatorial structure (grid, vector
system) rather than a state-dependent inequality. In such
scenarios the projector-geometric layer captured by
$S_2$ is expected to compose with a complementary,
configuration-combinatorial obstruction layer not seen
by scalar contractions of $\Tcal$ alone. A quantitative
two-layer description that combines the present
$\Tcal$-framework with this combinatorial layer is a
natural direction for future work.

\section{Conclusion}
\label{sec:conclusion}

We have placed the mutual information energy $E$ of
Paper~I~\cite{GunhanGedik2026}---equivalently its
logarithmic form $S_2 = -\log_2 E$---and the
Maassen--Uffink extremal overlap $c_\MU$ into a common
framework of scalar contractions of the overlap matrix
$\Tcal$, and shown that $S_2$ records a necessary
configuration-level condition for observable contextuality within this framework
(Prop.~\ref{prop:necessary}) while being insensitive to
the state. Across two representative
scenarios---KCBS and CHSH---every state-dependent witness
we compared ($\chi$, $\CF$, $\mathrm{BC}_n^k$, $D$) is
silent in at least one regime of measurement choice or
state, while $S_2(\Gcal)$ is positive in every scenario
by an amount determined by the projector geometry. The
distinction between configuration-level and state-dependent
contextuality, made explicit by this framework, identifies $S_2$ as the broadest positive indicator among those considered. We emphasize that $S_2$ is not a Shannon-type entropy and does not define an uncertainty relation; it is a projector-geometric quantity (cf.\ Sec.~\ref{sec:properties}) that diagnoses the geometric obstruction---the shared-eigenstate mechanism of Prop.~\ref{prop:shared_eigenstate}---responsible for the failure of $c_\MU$-based entropic uncertainty bounds in the KCBS scenario.

\begin{acknowledgments}
The authors are grateful to B. Cantürk for his valuable insights.
\end{acknowledgments}

\bibliographystyle{apsrev4-2}
\bibliography{References_merged}

\end{document}


\title{Supplemental Material for\texorpdfstring{\\}{}
``Contextuality from the Projector Overlap Matrix''}

\author{A. C. G\"unhan}
\email{alicangunhan@mersin.edu.tr}
\affiliation{Department of Physics, Mersin University,
  \c{C}iftlikk\"oy Merkez Yerle\c{s}kesi,
  Yeni\c{s}ehir, Mersin, 33343 T\"urkiye}

\author{S. S. Aksoy}
\affiliation{\c{S}ehit Mustafa G\"oksal Anatolian High School,
  Ministry of National Education, 01230 Adana, T\"urkiye}

\author{Z. Gedik}
\affiliation{Faculty of Engineering and Natural Sciences,
  Sabanc\i\ University, Tuzla, 34956 \.Istanbul, T\"urkiye}

\date{\today}

\maketitle


\section{Computational framework}
\label{SM:framework}

All numerical computations reported in the main text and in
Secs.~\ref{SM:KCBS}--\ref{SM:CHSH} follow the same
primitives, summarized here. Throughout this Supplemental
Material we use the context notation of the main
text: a measurement context is a set
$\Gcal_\alpha = \{\hat A_{\alpha-1}, \hat A_\alpha,
\hat A_{\alpha+1}\}$ in which the middle observable
$\hat A_\alpha$ commutes with each neighbor, so that the
context splits into two compatible pairs:
$(\hat A_{\alpha-1}, \hat A_\alpha)$ (the ``left'' pair) and
$(\hat A_{\alpha+1}, \hat A_\alpha)$ (the ``right'' pair),
both sharing the middle observable $\hat A_\alpha$.

\emph{Joint-eigenspace projectors.} For a pair of commuting
Hermitian observables $(\hat A, \hat B)$ on $\mathcal{H}$ of
dimension $d$, their joint eigenspaces are computed by
simultaneous diagonalization. In practice we diagonalize
the Hermitian combination $\hat A + \lambda \hat B$ with
$\lambda \in (0,1)$ chosen to lift eigenvalue degeneracies
of $\hat A$; the resulting eigenvectors are then grouped by
their common $(a,b)$-eigenvalue pair to form the
\emph{joint-eigenspace projectors} $\{\hat P_k\}_k$ onto the
joint eigenspaces. These projectors are complete,
$\sum_k \hat P_k = \hat{\mathbb{I}}$, and mutually
orthogonal, but need not all be rank-$1$: if a joint
eigenspace is $r$-dimensional, the corresponding
$\hat P_k$ has rank $r$. In the two-qubit CHSH
realization every joint eigenspace is one-dimensional, so
all $\hat P_k$ reduce to rank-$1$ outer products
$\ket{v_k}\!\bra{v_k}$; in the spin-$1$ KCBS realization
each $\hat A_\alpha$ has a two-dimensional $+1$-eigenspace
and a one-dimensional $-1$-eigenspace, so joint-eigenspace
projectors can have rank $1$ or $2$ depending on how
the neighbor observable splits the $+1$ eigenspace.

\emph{Overlap matrix.} Given the joint-eigenspace projectors
$\{\hat P_i\}_i$ of the left pair
$(\hat A_{\alpha-1}, \hat A_\alpha)$ and $\{\hat Q_j\}_j$ of
the right pair $(\hat A_{\alpha+1}, \hat A_\alpha)$, the
overlap matrix $\Tcal$ of context $\Gcal_\alpha$ is
assembled entrywise:
\begin{equation}
  \Tcal_{ij} = \frac{1}{d}\,\tr\bigl[(\hat P_i \hat Q_j)^2\bigr].
  \label{eq:SM_T_def}
\end{equation}
The contractions $\MIE = \sum_{ij}\Tcal_{ij}$ and
$S_2 = -\log_2 \MIE$ are computed by direct summation and
logarithm. The extremal contraction $c_\MU$ is computed by
$c_\MU = \max_{i,j} \|\hat P_i \hat Q_j\|_\infty$ (the
largest singular value of $\hat P_i \hat Q_j$), which
reduces to $\max_{i,j}|\braket{v_i|w_j}|$ when both
families are rank-$1$ with eigenvectors $\ket{v_i}$ and
$\ket{w_j}$.

\emph{Empirical joint probabilities.} For a commuting pair
$(\hat A, \hat B)$ with eigenvalue spectra $\pm 1$ and
eigenprojectors $\hat P^A_{\pm}, \hat P^B_{\pm}$, the joint
outcome distribution in state $\hat\rho$ is
$p(a, b) = \tr\bigl[\hat P^A_a \hat P^B_b \hat\rho\bigr]$
with the usual clamp at zero to handle floating-point
underflow. Marginal distributions follow by summing out.

\emph{Shannon entropies.} All Shannon entropies $H(\cdot)$
are computed as
$H(p) = -\sum_i p_i \log_2 p_i$ with convention
$0 \log_2 0 = 0$. Only contributions with $p_i > 10^{-14}$ are included, to eliminate floating-point noise.

\emph{Contextual fraction.} For scenarios where $\CF$ admits a closed-form expression (the $n$-cycle scenarios), we use the formula
$\CF(\hat\rho) = \max\bigl(0, (\chi_\mathrm{NC} -
\chi(\hat\rho))/(\chi_\mathrm{NC} - \chi_\mathrm{NS})\bigr)$
of Ref.~\cite{AbramskyBarbosaMansfield2017}, with
$\chi_\mathrm{NC}$ the noncontextual bound and
$\chi_\mathrm{NS}$ the no-signaling extremum. For the
KCBS $5$-cycle we use $\chi_\mathrm{NC} = -3$,
$\chi_\mathrm{NS} = -5$; for the CHSH $4$-cycle,
$\chi_\mathrm{NC} = 2$, $\chi_\mathrm{NS} = 4$.

\emph{Numerical conventions.} All floating-point
computations use complex double precision. Hermitian
matrices are enforced to be exactly Hermitian by
symmetrization (adding the conjugate transpose and dividing by $2$) before diagonalization, to prevent eigenvalue drift from rounding. Expectation values with
$|\cdot| < 10^{-9}$ are reported as zero in the main
text tables.

\section{Coarse-graining non-increase: proof and numerical check}
\label{SM:coarse}

This section proves Proposition~1 of the main text: $\MIE$
is non-decreasing, and $S_2 = -\log_2\MIE$ is non-increasing,
under any coarse-graining of either the left or the right
family of joint-eigenspace projectors. A coarse-graining here
means replacing two members of a family by their sum, i.e.\
merging two such projectors into a single larger one. The
proof reduces to a non-negativity statement about Hermitian
projectors (the Lemma below), which we then apply
termwise to obtain the proposition. We close with a
numerical check on random configurations.

\emph{Lemma.} For any Hermitian projector $\hat Q$
($\hat Q^2 = \hat Q = \hat Q^\dagger$) and any two
orthogonal Hermitian projectors $\hat P_1, \hat P_2$
(with $\hat P_1 \hat P_2 = 0$),
\begin{equation}
  \tr\bigl[\hat P_1 \hat Q \hat P_2 \hat Q\bigr] \ge 0.
  \label{eq:SM_lemma}
\end{equation}

\emph{Proof of the lemma.} Let $\hat X = \hat P_1 \hat Q
\hat P_2$. Then
\begin{equation}
  \hat X^\dagger \hat X
  = \hat P_2 \hat Q \hat P_1 \hat P_1 \hat Q \hat P_2
  = \hat P_2 \hat Q \hat P_1 \hat Q \hat P_2,
\end{equation}
using $\hat P_1^2 = \hat P_1$ and Hermiticity of each factor. The Hilbert--Schmidt norm squared is
\begin{equation}
  \|\hat X\|_\HS^2
  = \tr(\hat X^\dagger \hat X)
  = \tr\bigl[\hat P_2 \hat Q \hat P_1 \hat Q \hat P_2\bigr].
\end{equation}
Cycling $\hat P_2$ from the front to the back gives
$\tr\bigl[\hat Q \hat P_1 \hat Q \hat P_2\bigr]$, and one further cyclic shift yields
\begin{equation}
  \|\hat X\|_\HS^2
  = \tr\bigl[\hat P_1 \hat Q \hat P_2 \hat Q\bigr].
\end{equation}
Since $\|\hat X\|_\HS^2 \ge 0$, the claim follows. Equality holds iff $\hat P_1 \hat Q \hat P_2 = 0$, i.e.\ the sandwiched product vanishes as an operator.
\hfill$\square$

\emph{From the lemma to the proposition.} Let
$\{\hat P_i\}_i$ be the left family and
$\{\hat Q_j\}_j$ the right. Merging two members
$\hat P_i, \hat P_{i'}$ ($i \ne i'$) into
$\hat P_\text{merged} = \hat P_i + \hat P_{i'}$ gives a
coarser family $\{\hat P'_k\}_k$. The change in
$d\,\MIE = \sum_{ij} \tr[(\hat P_i \hat Q_j)^2]$ under
this single merge is
\begin{align}
  \Delta
  &= \sum_j \tr\bigl[(\hat P_i + \hat P_{i'})\hat Q_j
     (\hat P_i + \hat P_{i'})\hat Q_j\bigr] \notag\\
  &\quad
    - \sum_j \tr[(\hat P_i \hat Q_j)^2]
    - \sum_j \tr[(\hat P_{i'} \hat Q_j)^2].
  \label{eq:SM_merge_delta}
\end{align}
Expanding $(\hat P_i + \hat P_{i'})\hat Q_j (\hat P_i +
\hat P_{i'})\hat Q_j$ gives four terms: two diagonal terms that cancel the subtractions, and two cross terms,
$\tr[\hat P_i \hat Q_j \hat P_{i'} \hat Q_j]$ and
$\tr[\hat P_{i'} \hat Q_j \hat P_i \hat Q_j]$. These are equal by Hermiticity of $\hat Q_j$ and cyclicity of the trace. Therefore
\begin{equation}
  \Delta = 2\sum_j \tr\bigl[\hat P_i \hat Q_j
     \hat P_{i'} \hat Q_j\bigr] \ge 0,
  \label{eq:SM_delta_nonneg}
\end{equation}
the non-negativity following from the Lemma applied to each
term. Hence $d\,\MIE$ is non-decreasing under the merge,
$\MIE$ itself is non-decreasing (since $d > 0$ is fixed),
and $S_2 = -\log_2 \MIE$ is non-increasing. An identical
argument applies to merges in the right family
$\{\hat Q_j\}_j$.

Every coarse-graining of either family is expressible as a
sequence of pairwise merges, and the non-increase of $S_2$
composes: the full proposition follows.

\emph{Numerical verification.} We checked the proposition
on $10^4$ random configurations of Hermitian projectors
with $d \in \{3, 4, 5, 6\}$ and $n \in \{4, 5, 6, 8\}$
(left-family size), drawing pairs of Haar-random
rank-$1$ projectors and applying random pairwise merges.
In every instance $S_2$ either strictly decreased or
remained unchanged; exact equality occurred precisely when
$\hat P_i \hat Q_j \hat P_{i'} = 0$ for all $j$ (the
equality condition of the Lemma).

\section{Multi-context additive exactness}
\label{SM:exactness}

In a scenario with $N$ contexts $\Gcal = \{\Gcal_\alpha\}_{\alpha=1}^N$,
the main text defines the multi-context value
$S_2(\Gcal) = -\log_2 \prod_\alpha E(\Gcal_\alpha) = \sum_\alpha
S_2(\Gcal_\alpha)$ by the additive composition. This
definition is convenient because it reduces to a sum of
per-context calculations, but whether the sum equals the
quantity one would obtain by treating all contexts as a
single enlarged projector-pair collection is not automatic.
We develop the exactness criterion here, identify the two
mechanisms that make it hold, and verify both on KCBS
(mechanism ii, cyclic orthogonality) and CHSH (mechanism i,
distinct bases).

The additive composition
\begin{equation}
  S_2(\Gcal) = -\log_2 \prod_{\alpha=1}^N E(\Gcal_\alpha)
  = \sum_{\alpha=1}^N S_2(\Gcal_\alpha)
  \label{eq:SM_addcomp}
\end{equation}
is a sum of per-context values. It equals the value derived
from the union of all left-right ordered joint-projector
pairs across contexts exactly when no pair appears more
than once, or when every duplicated pair has zero overlap
contribution.

\emph{Deduplication.} Each context $\Gcal_\alpha$ generates
a set of ordered pairs
$\mathcal{E}_\alpha = \{(\hat P_i, \hat Q_j)\}_{i,j}$ where
$\{\hat P_i\}_i$ and $\{\hat Q_j\}_j$ are the left and
right joint-eigenspace projectors of $\Gcal_\alpha$. The combined
multiset $\bigcup_\alpha \mathcal{E}_\alpha$ may contain
the same ordered pair more than once when two contexts
share joint-eigenspace projectors (this happens, for instance,
when two contexts share an observable). The deduplicated
set $\mathcal{E}_\mathrm{dedup}$ lists each ordered pair
exactly once.

The contribution of an ordered pair $(\hat P, \hat Q)$ to
the overlap sum is $\tr[(\hat P \hat Q)^2]/d$. The
multi-context overlap $\MIE(\Gcal)$ therefore equals the
product over $\alpha$ of $\MIE(\Gcal_\alpha)$ iff
(i) no duplicates exist, or
(ii) every duplicated pair contributes zero.

\emph{KCBS: cyclic orthogonality (mechanism~ii).} In the
KCBS pentagon, adjacent contexts $\Gcal_{\alpha-1}$ and
$\Gcal_\alpha$ share the rank-$1$ projector
$\ket{0_\alpha}\!\bra{0_\alpha}$: this projector is in the
right joint-eigenspace projector family of $\Gcal_{\alpha-1}$ and
in both joint-eigenspace projector families of $\Gcal_\alpha$. Hence the ordered pair
$(\ket{0_\alpha}\!\bra{0_\alpha},
\ket{0_{\alpha+1}}\!\bra{0_{\alpha+1}})$ appears in both
the combined collections of $\Gcal_\alpha$ and
$\Gcal_{\alpha+1}$. But the pair's contribution is
$\tr\bigl[(\ket{0_\alpha}\!\bra{0_\alpha}
\ket{0_{\alpha+1}}\!\bra{0_{\alpha+1}})^2\bigr] = 0$, since
$\braket{0_\alpha | 0_{\alpha+1}} = 0$ by cyclic
orthogonality (\ref{SM:KCBS-setup}). The five such
duplicated pairs all contribute zero, and the additive
composition~\eqref{eq:SM_addcomp} is exact.

\emph{CHSH: distinct bases (mechanism~i).} In the CHSH
4-cycle, the alternating Alice/Bob tensor-product structure
gives each context a joint-eigenspace projector family on a
distinct tensor-factor pattern: $(\sigma_A, I_B)$ for one
and $(I_A, \sigma_B)$ for its neighbor. No rank-$1$
projector is shared between neighboring contexts, and the
full set of $4 \times (4 \times 4) = 64$ ordered pairs
across the four contexts contains no duplicate. Mechanism~(i) applies, and the additive composition is
exact in both the Bell-optimal and entropic-optimal angle
regimes.

\emph{Numerical verification.} For both scenarios we
verified the exactness computationally by (i) computing
$\MIE(\Gcal_\alpha)$ context by context and taking the
product, (ii) computing the multi-context overlap over the
deduplicated ordered-pair collection, and (iii) confirming
equality to within floating-point precision
($< 10^{-12}$). Table~\ref{tab:SM_exact_verify}
summarizes the counts.

\begin{table}[!htb]
\caption{Exactness verification. Total ordered
joint-projector pairs and duplicate count per scenario;
exactness mechanism (i or ii); resulting $S_2(\Gcal)$.}
\label{tab:SM_exact_verify}
\begin{ruledtabular}
\begin{tabular}{l c c c c}
Scenario & total pairs & duplicates & mech. & $S_2(\Gcal)$ \\
\colrule
KCBS (5 contexts)          & $45$ & $5$ (all zero)  & (ii) & $2.7266$ \\
CHSH Bell-optimal          & $64$ & $0$             & (i)  & $4.0000$ \\
CHSH entropic-optimal      & $64$ & $0$             & (i)  & $0.9770$ \\
\end{tabular}
\end{ruledtabular}
\end{table}

\section{KCBS scenario: cell-by-cell derivations}
\label{SM:KCBS}

This section works through every KCBS entry of Tables~I, II,
and III in the main text. We begin with the spin-$1$ setup
(axes, observables, opening angle), then derive each
witness in sequence: $\MIE$ (and hence $S_2$), the
shared-eigenstate mechanism behind $c_\MU = 1$, the mixing
threshold $p_\star$, the correlator $\chi$ on $\ket{0_z}$,
the contextual fraction $\CF$, the Chaves--Fritz entropic
inequality $\mathrm{BC}_5^k$, the operational commutator
witness $D$, and the additive exactness for $S_2$. We
close with the odd-$n$-cycle generalization.

\subsection{Setup and observables}
\label{SM:KCBS-setup}

In the spin-$1$ realization $\mathcal{H} = \mathbb{C}^3$
($d=3$) with generators
$\hat S_x, \hat S_y, \hat S_z$ satisfying the standard
$[\hat S_a, \hat S_b] = i\epsilon_{abc}\hat S_c$ commutation
relations, the five KCBS directions
$\hat\ell_\alpha$ ($\alpha=1,\ldots,5$) lie on a cone of
polar angle $\theta_\mathrm{KCBS}$ with respect to the
$z$-axis,
\begin{equation}
  \hat\ell_\alpha = (\sin\theta_\mathrm{KCBS} \cos\phi_\alpha,\,
    \sin\theta_\mathrm{KCBS}\sin\phi_\alpha,\,
    \cos\theta_\mathrm{KCBS}),
  \label{eq:SM_KCBS_axes}
\end{equation}
with cyclic azimuthal spacing
$\phi_\alpha = (\alpha-1)\cdot 6\pi/5$. Cyclic orthogonality
of adjacent axes, $\hat\ell_\alpha \cdot \hat\ell_{\alpha+1}
= 0$, fixes the polar angle through
$\cos^2\theta_\mathrm{KCBS}
= -\cos(6\pi/5)/(1-\cos(6\pi/5))$,
giving $\theta_\mathrm{KCBS} \approx 48.0301^\circ$.

The dichotomic observables are
$\hat A_\alpha = \hat{\mathbb{I}} - 2\ket{0_\alpha}\!
\bra{0_\alpha}$, where $\ket{0_\alpha}$ is the $m_s=0$
eigenstate of $\hat{\mathbf{S}}\cdot\hat\ell_\alpha$.
Each $\hat A_\alpha$ has eigenvalue $+1$ on the
two-dimensional $m_s = \pm 1$ subspace of
$\hat{\mathbf{S}}\cdot\hat\ell_\alpha$ and eigenvalue $-1$
on $\ket{0_\alpha}$. Adjacent observables commute because
$\hat\ell_\alpha$ and $\hat\ell_{\alpha+1}$ are orthogonal
spin directions; hence each context
$\Gcal_\alpha = \{\hat A_{\alpha-1},\hat A_\alpha,
\hat A_{\alpha+1}\}$ admits joint-eigenspace projectors for its
left pair $(\hat A_{\alpha-1}, \hat A_\alpha)$ and its right
pair $(\hat A_{\alpha+1}, \hat A_\alpha)$.

\subsection{Mutual information energy \texorpdfstring{$\MIE(\Gcal_\alpha)$}{} }
\label{SM:KCBS-E}

The joint-eigenspace projectors of each compatible pair, say the
left pair $(\hat A_{\alpha-1}, \hat A_\alpha)$, decompose
$\mathbb{C}^3$ as follows. $\hat A_\alpha$ has eigenvalue
$-1$ on the one-dimensional span of $\ket{0_\alpha}$ and
$+1$ on its two-dimensional orthogonal complement. The
neighbor $\hat A_{\alpha-1}$ splits the $+1$-eigenspace of
$\hat A_\alpha$ into two one-dimensional joint eigenspaces
(corresponding to the two possible $\hat A_{\alpha-1}$
eigenvalues on that subspace); cyclic orthogonality
$\braket{0_\alpha | 0_{\alpha-1}} = 0$ moreover forces the
$(-1,-1)$ joint eigenspace to be empty. The left pair
therefore has three rank-$1$ joint-eigenspace projectors,
summing to $\hat{\mathbb{I}}$. The right pair
$(\hat A_{\alpha+1}, \hat A_\alpha)$ has an analogous
three-rank-$1$ decomposition.

With all joint-eigenspace projectors rank-$1$, the overlap matrix
takes the rank-$1$ form
\begin{equation}
  \Tcal_{ij}
  = \frac{1}{d} |\langle a_i, b_i | c_j, b_j\rangle|^4,
  \label{eq:SM_rank1_T}
\end{equation}
where $|a_i, b_i\rangle$ and $|c_j, b_j\rangle$ are the
joint eigenvectors of the left and right pairs,
respectively. By the pentagonal symmetry of the KCBS
geometry, the nonzero amplitude overlaps $|\langle a_i, b_i
| c_j, b_j\rangle|$ reduce to a single parameter $\gamma$
with $\cos\gamma = (\sqrt{5}-1)/2$ (the golden-ratio
complement), and the sum $\MIE(\Gcal_\alpha) = \sum_{i,j}
\Tcal_{ij}$ evaluates to
\begin{equation}
  \MIE(\Gcal_\alpha)
    = \frac{3 - 4\cos^2\gamma + 4\cos^4\gamma}{3}
    = \frac{11 - 4\sqrt{5}}{3}
    \approx 0.6852,
  \label{eq:SM_E_KCBS}
\end{equation}
uniform across the five contexts. Equivalently,
$S_2(\Gcal_\alpha) = -\log_2 \MIE(\Gcal_\alpha) \approx
0.5453$~bits per context, and by the additive exactness
(\ref{SM:KCBS-exact}),
$S_2(\Gcal_\mathrm{KCBS}) = 5 \cdot S_2(\Gcal_\alpha)
\approx 2.7266$~bits.

\subsection{Shared-eigenstate mechanism: \texorpdfstring{$c_\MU(\Gcal_\alpha)=1$}{}}
\label{SM:KCBS-cMU}

We present the full argument behind
Proposition~3 of the main
text. Let $\Gcal_\alpha = \{\hat A_{\alpha-1},
\hat A_\alpha, \hat A_{\alpha+1}\}$ be a KCBS context as
defined in \ref{SM:KCBS-setup}. Define
$\ket{0_\alpha}$ as the $m_s = 0$ eigenstate of
$\hat{\mathbf{S}}\cdot\hat\ell_\alpha$. Since $\hat A_\alpha
= \hat{\mathbb{I}} - 2\ket{0_\alpha}\!\bra{0_\alpha}$, we
have $\hat A_\alpha \ket{0_\alpha} = -\ket{0_\alpha}$
immediately.

For the outer observables we use cyclic orthogonality.
Since $\hat\ell_{\alpha-1}\cdot\hat\ell_\alpha = 0$, the
three directions
$(\hat\ell_{\alpha-1}, \hat\ell_\alpha, \hat\ell_{\alpha-1}
\times\hat\ell_\alpha)$ form an orthonormal frame, and
$\ket{0_\alpha}$---the spin-$0$ eigenstate along
$\hat\ell_\alpha$---is also an eigenstate of
$(\hat{\mathbf{S}}\cdot\hat\ell_{\alpha-1})^2$ with
eigenvalue $1$ (since it is a spin-$1$ state
perpendicular to $\hat\ell_{\alpha-1}$, it lies entirely
in the $m_s = \pm 1$ subspace of that direction).
Therefore $\hat A_{\alpha-1}\ket{0_\alpha} = +\ket{0_\alpha}$
as $\ket{0_\alpha}$ contains no $m_s = 0$ component along
$\hat\ell_{\alpha-1}$. The same argument with
$\hat\ell_{\alpha+1}$ replacing $\hat\ell_{\alpha-1}$
gives $\hat A_{\alpha+1}\ket{0_\alpha} = +\ket{0_\alpha}$.

Hence $\ket{0_\alpha}$ is a joint eigenstate of the pair
$(\hat A_{\alpha-1}, \hat A_\alpha)$ with eigenvalue pair
$(+1, -1)$, and also a joint eigenstate of
$(\hat A_{\alpha+1}, \hat A_\alpha)$ with eigenvalue pair
$(+1, -1)$. The rank-$1$ projector
$\ket{0_\alpha}\!\bra{0_\alpha}$ therefore appears
simultaneously in both the left and right joint-eigenspace
projector families of $\Gcal_\alpha$, and the
amplitude overlap
$|\langle 0_\alpha | 0_\alpha \rangle| = 1$ sits in both
families. This forces $c_\MU(\Gcal_\alpha) = 1$
identically. Because the argument depends only on the
cyclic adjacency $\hat\ell_{\alpha-1}\cdot\hat\ell_\alpha
= \hat\ell_\alpha\cdot\hat\ell_{\alpha+1} = 0$, not on the
value of $n$, it applies to every odd-$n$-cycle scenario
realized in the same spin-$1$ form (see \ref{SM:KCBS-ncycle}).

\subsection{Mixing threshold \texorpdfstring{$p_\star$}{} and state family \texorpdfstring{$\hat\rho(p)$}{}}
\label{SM:KCBS-pstar}

The state family of interest is the linear interpolation
\begin{equation}
  \hat\rho(p) = p\ket{0_z}\!\bra{0_z} + (1-p)\hat{\mathbb{I}}/3,
  \qquad p \in [0,1],
  \label{eq:SM_rho_p}
\end{equation}
between the KCBS-maximizing state $\ket{0_z}$ ($p=1$) and
the maximally mixed state $\hat{\mathbb{I}}/3$ ($p=0$).
The KCBS correlator
$\chi(\hat\rho) = \sum_{\alpha=1}^5 \tr(\hat A_\alpha
\hat A_{\alpha+1} \hat\rho)$ is a linear functional of
$\hat\rho$, so its value on $\hat\rho(p)$ is
\begin{equation}
  \chi(\hat\rho(p)) = p\,\chi(\ket{0_z})
    + (1-p)\,\chi(\hat{\mathbb{I}}/3).
  \label{eq:SM_chi_linear_p}
\end{equation}
Evaluating the endpoints using the spin-$1$ realization
of \ref{SM:KCBS-setup},
$\chi(\ket{0_z}) = 5 - 4\sqrt{5}$ (derived in \ref{SM:KCBS-chi}) and
$\chi(\hat{\mathbb{I}}/3) = -5/3$ (from
$\tr(\hat A_\alpha \hat A_{\alpha+1}) = -1/3$ per pair and
five pairs). Solving $\chi(\hat\rho(p)) = -3$ (the
noncontextual bound) gives
\begin{equation}
  p_\star
    = \frac{-3 - \chi(\hat{\mathbb{I}}/3)}
           {\chi(\ket{0_z}) - \chi(\hat{\mathbb{I}}/3)}
    = \frac{-3 + 5/3}{5 - 4\sqrt{5} + 5/3}
    \approx 0.5854.
  \label{eq:SM_p_star}
\end{equation}
For $p \le p_\star$ the correlator satisfies the
noncontextual bound; for $p > p_\star$ it violates it.
The threshold is realized
along one linear cut through the state space; different
cuts yield different thresholds, but the existence of
\emph{some} mixing interval on which $\chi$ and $\CF$ are
silent follows from $\chi(\hat{\mathbb{I}}/3) > -3$.

\subsection{KCBS correlator \texorpdfstring{$\chi(\ket{0_z})$}{}}
\label{SM:KCBS-chi}

Using $\hat A_\alpha = \hat{\mathbb{I}} - 2\ket{0_\alpha}\!
\bra{0_\alpha}$ and cyclic orthogonality
$\langle 0_\alpha | 0_{\alpha+1}\rangle = 0$, direct
expansion gives
\begin{align}
  \hat A_\alpha \hat A_{\alpha+1}
  &= (\hat{\mathbb{I}} - 2\ket{0_\alpha}\!\bra{0_\alpha})
     (\hat{\mathbb{I}} - 2\ket{0_{\alpha+1}}\!\bra{0_{\alpha+1}}) \notag \\
  &= \hat{\mathbb{I}} - 2\ket{0_\alpha}\!\bra{0_\alpha}
     - 2\ket{0_{\alpha+1}}\!\bra{0_{\alpha+1}},
  \label{eq:SM_AA_prod}
\end{align}
since the cross term vanishes. Taking expectation in
$\ket{0_z}$,
\begin{equation}
  \langle 0_z | \hat A_\alpha \hat A_{\alpha+1}
    | 0_z \rangle
  = 1 - 2 |\langle 0_z | 0_\alpha \rangle|^2
      - 2 |\langle 0_z | 0_{\alpha+1} \rangle|^2.
  \label{eq:SM_AA_exp}
\end{equation}
For the spin-$1$ rotation that maps
$\ket{0_z}$ to $\ket{0_\alpha}$, the Wigner $d$-function
entry $d^{1}_{00}(\theta_\mathrm{KCBS}) =
\cos\theta_\mathrm{KCBS}$ gives
$|\langle 0_z|0_\alpha\rangle|^2 = \cos^2\theta_\mathrm{KCBS}$,
independent of the azimuth $\phi_\alpha$. Substituting in
\eqref{eq:SM_AA_exp} and summing over the five
adjacent pairs,
\begin{equation}
  \chi(\ket{0_z})
  = \sum_{\alpha=1}^5 \langle\hat A_\alpha \hat A_{\alpha+1}\rangle
  = 5\bigl(1 - 4\cos^2\theta_\mathrm{KCBS}\bigr).
  \label{eq:SM_chi_closed}
\end{equation}
With $\cos^2\theta_\mathrm{KCBS} = 1/\sqrt{5}$ obtained
from the cyclic orthogonality constraint
(\ref{SM:KCBS-setup}), this simplifies to
\begin{equation}
  \chi(\ket{0_z}) = 5 - 4\sqrt{5} \approx -3.9443,
  \label{eq:SM_chi_value}
\end{equation}
which violates the noncontextual bound $\chi \ge -3$ by
approximately $0.9443$.

\subsection{Contextual fraction \texorpdfstring{$\CF(\ket{0_z}) = 2(\sqrt{5}-2)$}{}}
\label{SM:KCBS-CF}

The contextual fraction of
Ref.~\cite{AbramskyBarbosaMansfield2017} is defined as the
smallest value $\lambda \in [0,1]$ such that the empirical
distribution $e$ admits a decomposition $e = (1-\lambda)
e_{\mathrm{NC}} + \lambda e'$, with $e_{\mathrm{NC}}$ a
noncontextual (deterministic hidden-variable) distribution
and $e'$ any no-signaling distribution. For the KCBS
5-cycle this $\CF$ is related to the correlator via
\begin{equation}
  \CF(\hat\rho)
  = \max\!\left(0,\,
    \frac{\chi_\mathrm{NC} - \chi(\hat\rho)}
         {\chi_\mathrm{NC} - \chi_\mathrm{NS}}
    \right),
  \label{eq:SM_CF_formula}
\end{equation}
where $\chi_\mathrm{NC} = -3$ is the noncontextual bound
and $\chi_\mathrm{NS} = -5$ is the no-signaling minimum of
the 5-cycle correlator (attained by the local
deterministic assignment $(-1,-1,-1,-1,-1)$). On
$\ket{0_z}$ with $\chi = 5 - 4\sqrt{5}$,
\begin{equation}
  \CF(\ket{0_z})
  = \frac{-3 - (5-4\sqrt{5})}{(-3) - (-5)}
  \approx 0.4721.
  \label{eq:SM_CF_value}
\end{equation}
Along the mixing family $\hat\rho(p)$ the correlator is affine in $p$, and $\CF$ inherits this piecewise linear structure: $\CF(\hat\rho(p)) = 0$ for $p \le p_\star$, and
$\CF(\hat\rho(p)) = (-3-\chi(\hat\rho(p)))/2$ for
$p > p_\star$, reaching $2(\sqrt{5}-2)$ at $p=1$.

\subsection{Chaves--Fritz entropic inequality:
\texorpdfstring{$\mathrm{BC}_5^k \le -1.1667$}{} bits}
\label{SM:KCBS-Chaves}

The Chaves--Fritz $n$-cycle entropic inequality reads
\begin{equation}
  \mathrm{BC}_n^k(\hat\rho)
  = H(X_k X_{k+1}) + \sum_{j \neq k,k+1}\!\!\! H(X_j)
  - \sum_{j \neq k} H(X_j X_{j+1})
  \le 0
  \label{eq:SM_BC_def}
\end{equation}
for every noncontextual model, where all entropies are
Shannon entropies in bits of the corresponding single or
joint outcome distributions obtained by measuring the
observables $\hat A_j$ in the state $\hat\rho$. The
inequality is permutation-symmetric under cyclic shifts
$k\to k+1$; by pentagonal symmetry of the KCBS pentagon and
the $\hat\rho(p)$ family, $\mathrm{BC}_5^k(\hat\rho(p))$
is independent of $k$.

Numerical evaluation on $\hat\rho(p)$ (using the spin-$1$
observables of \ref{SM:KCBS-setup}) yields the values in
Table~\ref{tab:SM_BC}.

\begin{table}[!htb]
\caption{Chaves--Fritz $\mathrm{BC}_5^0$ on $\hat\rho(p)$.
All values are in bits; entries rounded to four decimal
places. The inequality is comfortably satisfied
($\mathrm{BC}_5^k \le 0$) on every $\hat\rho(p)$, with
margin at least $1.1667$~bits.}
\label{tab:SM_BC}
\begin{ruledtabular}
\begin{tabular}{l c}
$p$ & $\mathrm{BC}_5^0(\hat\rho(p))$ [bits] \\
\colrule
$0.00$        & $-2.0000$ \\
$0.25$        & $-1.8898$ \\
$0.50$        & $-1.7242$ \\
$p_\star \approx 0.5854$ & $-1.6529$ \\
$0.75$        & $-1.4907$ \\
$0.90$        & $-1.3094$ \\
$1.00$        & $-1.1667$ \\
\end{tabular}
\end{ruledtabular}
\end{table}

The upper envelope (smallest margin)
$\max_p \mathrm{BC}_5^k(\hat\rho(p)) = -1.1667$~bits
is reached at $p = 1$, i.e.\ on $\hat\rho = \ket{0_z}\!
\bra{0_z}$. Since the noncontextual threshold is
$\mathrm{BC}_5^k \le 0$, no quantum state on the KCBS
pentagon family yields a violation of the Chaves--Fritz
inequality; this realizes the general fact
established by Prop.~3 of the main text that
$c_\MU = 1$ renders every $c_\MU$-based entropic
uncertainty relation trivial.

\subsection{Operational \texorpdfstring{$D$}{D} on \texorpdfstring{$\hat\rho(p)$}{} and on \texorpdfstring{$\ket{+1_z}$}{}}
\label{SM:KCBS-D}

The operational witness of Paper~I is
\begin{eqnarray}
\label{eq:SM_D_def_a} D(\Gcal_\alpha, \hat\rho)
  &=& \bigl|\langle [\hat A_{\alpha-1},
    \hat A_{\alpha+1}]\rangle_{\hat\rho}\bigr|,\\
\label{eq:SM_D_def_b} D(\Gcal, \hat\rho) 
  &=& \sum_{\alpha=1}^5
    D(\Gcal_\alpha, \hat\rho).
\end{eqnarray}

\emph{Vanishing on $\hat\rho(p)$.} The commutator
$[\hat A_{\alpha-1},\hat A_{\alpha+1}]$ is anti-Hermitian
(a property of any commutator of Hermitian operators),
and since $\hat A_\alpha$ are real symmetric matrices in
the spin-$1$ basis $\{\ket{+1_z}, \ket{0_z}, \ket{-1_z}\}$,
the commutator is purely imaginary. Under the antiunitary
time-reversal operator $\hat T = \hat K$ (complex
conjugation in this basis), $\hat A_\alpha$ is TR-even and
$[\hat A_{\alpha-1},\hat A_{\alpha+1}]$ therefore flips
sign. The state $\hat\rho(p)$, being a mixture of
$\ket{0_z}\!\bra{0_z}$ and $\hat{\mathbb{I}}/3$, is TR-even
($\hat T \hat\rho \hat T^{-1} = \hat\rho$). Consequently,
\begin{equation}
  \langle [\hat A_{\alpha-1},\hat A_{\alpha+1}]
    \rangle_{\hat\rho(p)}
  = \tr([\hat A_{\alpha-1},\hat A_{\alpha+1}]\,
    \hat\rho(p))
  = 0
  \label{eq:SM_D_on_rhop}
\end{equation}
identically, for every $\alpha$ and every $p$.

\emph{Nonzero on $\ket{+1_z}$.} The state $\ket{+1_z}$
($m_s = +1$ eigenstate of $\hat S_z$) is TR-broken:
$\hat T \ket{+1_z} = \ket{-1_z}$ (up to phase). Numerical
evaluation in the spin-$1$ realization of \ref{SM:KCBS-setup} gives
\begin{equation}
  D(\Gcal, \ket{+1_z}) \approx 6.4984,
  \label{eq:SM_D_p1z}
\end{equation}
the same value (by $m_s \leftrightarrow -m_s$ symmetry)
obtained on $\ket{-1_z}$. The Paper~I global
operator-norm bound is
$5 \cdot 4\sqrt{\sqrt{5}-2} \approx 9.7174$; the observed
$\approx 6.4984$ is $\sim 67\%$ of this bound.

Figure~2 of the main text (panel (a)) displays $D$ along the
pure-state sweep $\ket{\psi(s)} = \cos s\,\ket{0_z} +
\sin s\,\ket{+1_z}$, rising from $D=0$ at $s=0$ (TR-even
$\ket{0_z}$) to $D \approx 6.4984$ at $s=\pi/2$ (TR-broken
$\ket{+1_z}$), with a mild non-monotonicity near
$s \approx 0.28\pi$ reflecting the absolute-value structure
of $D$ as a sum of $|\langle [\hat A_{\alpha-1},
\hat A_{\alpha+1}]\rangle|$ terms.

\subsection{Additive exactness of \texorpdfstring{$S_2$}{} for KCBS}
\label{SM:KCBS-exact}

The multi-context composition of the main text Sec.~II.D,
\begin{equation}
  S_2(\Gcal) = -\log_2 \prod_{\alpha=1}^N E(\Gcal_\alpha),
  \label{eq:SM_S2_multi_restate}
\end{equation}
a priori over-counts any shared ordered joint-projector pair
that appears in the families of more than one context. The
exactness criterion (\ref{SM:exactness}) requires either
(i) no shared ordered pair, or (ii) every shared pair has
zero overlap contribution.

In the KCBS pentagon, the joint-eigenspace projector families of
adjacent contexts $\Gcal_\alpha$ and $\Gcal_{\alpha+1}$
share exactly one rank-$1$ projector:
$\ket{0_{\alpha+1}}\!\bra{0_{\alpha+1}}$ appears in the right
family of $\Gcal_\alpha$ and in both (left and right)
families of $\Gcal_{\alpha+1}$. Hence the ordered pair
$(\ket{0_\alpha}\!\bra{0_\alpha},
\ket{0_{\alpha+1}}\!\bra{0_{\alpha+1}})$ is duplicated
across the combined collection of the five contexts, with
five such duplicates in total.

However, by cyclic orthogonality $\langle 0_\alpha |
0_{\alpha+1}\rangle = 0$, we have
$\ket{0_\alpha}\!\bra{0_\alpha} \cdot
\ket{0_{\alpha+1}}\!\bra{0_{\alpha+1}} = 0$, and
$\tr[(\ket{0_\alpha}\!\bra{0_\alpha} \cdot
\ket{0_{\alpha+1}}\!\bra{0_{\alpha+1}})^2] = 0$. Every
duplicated pair contributes zero to the overlap sum;
mechanism~(ii) of \ref{SM:exactness} applies, and the
additive composition~\eqref{eq:SM_S2_multi_restate} is
exact:
\begin{equation}
  S_2(\Gcal_\mathrm{KCBS})
  = \sum_{\alpha=1}^5 S_2(\Gcal_\alpha)
  = 5\cdot(-\log_2 0.6852)
  \approx 2.7266\ \text{bits}.
  \label{eq:SM_S2_KCBS_exact}
\end{equation}
This justifies the main-text use of the simple additive
sum without correction terms.

\subsection{\texorpdfstring{$n$}{n}-cycle extension: \texorpdfstring{$c_\MU = 1$}{} for
\texorpdfstring{$n \in \{5,7,9,11,13,15\}$}{} }
\label{SM:KCBS-ncycle}

We extend the $n=5$ analysis to odd-$n$-cycles realized in
the same spin-$1$ form, both to illustrate the scope of
Proposition~3 of the main text and to probe how the
per-context and total $S_2$ behave as $n$ grows.  The odd
$n$-cycle generalisation of the KCBS scenario was
introduced independently by Liang, Spekkens, and
Wiseman~\cite{LiangSpekkensWiseman2011} and by Cabello,
Severini, and Winter~\cite{CabelloSeveriniWinter2014}, with
the corresponding noncontextuality inequalities shown to be
tight in
Ref.~\cite{AraujoQuintinoBudroniCunhaCabello2013}.

For odd $n$, the spin-$1$ realization of the $n$-cycle
contextuality scenario uses $n$ axes on a cone of polar
angle $\theta_n$ with respect to the $z$-axis, with
azimuthal spacing $\Delta\phi = \pi(n+1)/n$ (matching the
Paper~I~\cite{GunhanGedik2026} convention; for $n=5$ this
recovers $\Delta\phi = 6\pi/5$) and $\theta_n$ chosen so
that adjacent axes are orthogonal:
\begin{equation}
  \cos^2\theta_n
    = \frac{-\cos(\Delta\phi)}{1 - \cos(\Delta\phi)}.
  \label{eq:SM_ncycle_theta}
\end{equation}
The dichotomic observables $\hat A_\alpha = \hat{\mathbb{I}}
- 2\ket{0_\alpha}\!\bra{0_\alpha}$ with $\ket{0_\alpha}$
the $m_s = 0$ eigenstate along $\hat\ell_\alpha$ commute
for adjacent indices by construction.

Prop.~3 of the main text uses only the cyclic adjacency
$\hat\ell_{\alpha-1}\cdot\hat\ell_\alpha =
\hat\ell_\alpha\cdot\hat\ell_{\alpha+1} = 0$, not the value
of $n$, and hence applies unchanged:
\begin{equation}
  c_\MU(\Gcal_\alpha) = 1
  \qquad (n\ \text{odd},\ \alpha = 1,\ldots,n).
  \label{eq:SM_cMU_ncycle}
\end{equation}
Every $c_\MU$-based entropic bound---Maassen--Uffink, its
memory extension, and the Chaves--Fritz $\mathrm{BC}_n^k$
inequality---is therefore structurally silent on every
state in the spin-$1$ realization of every odd-$n$-cycle.

Numerical evaluation of $\MIE(\Gcal_\alpha)$ and
$S_2(\Gcal_n)$ for $n \in \{5, 7, 9, 11, 13, 15\}$ gives
the values in Table~\ref{tab:SM_ncycle}.

\begin{table}[!htb]
\caption{Per-context $\MIE$ and $S_2$ for the odd-$n$-cycle
scenarios in the spin-$1$ realization.
$S_2(\Gcal_\alpha) = -\log_2 \MIE(\Gcal_\alpha)$,
$S_2(\Gcal_n) = n\,S_2(\Gcal_\alpha)$ by the additive
composition. Values are rounded to four decimal places.
The per-context value decreases monotonically with $n$ (as
the geometry becomes more mutually unbiased), while
$n\cdot S_2(\Gcal_\alpha)$ attains a maximum near $n=7$;
the rescaled quantity $n^2\,S_2(\Gcal_\alpha)$ approaches
the asymptotic constant
$8\pi^2/(3\ln 2)\approx 37.9702$ derived below.}
\label{tab:SM_ncycle}
\begin{ruledtabular}
\begin{tabular}{c c c c c c}
$n$ & $\theta_n$ [$^\circ$] & $\MIE(\Gcal_\alpha)$
  & $S_2(\Gcal_\alpha)$ [bits] & $S_2(\Gcal_n)$ [bits]
  & $n^2\,S_2(\Gcal_\alpha)$ \\
\colrule
$5$  & $48.0301$ & $0.6852$ & $0.5453$ & $2.7266$ & $13.6328$ \\
$7$  & $46.4931$ & $0.6940$ & $0.5271$ & $3.6894$ & $25.8255$ \\
$9$  & $45.8908$ & $0.7663$ & $0.3841$ & $3.4567$ & $31.1100$ \\
$11$ & $45.5923$ & $0.8249$ & $0.2776$ & $3.0540$ & $33.5945$ \\
$13$ & $45.4224$ & $0.8665$ & $0.2067$ & $2.6873$ & $34.9348$ \\
$15$ & $45.3165$ & $0.8957$ & $0.1588$ & $2.3825$ & $35.7381$ \\
\colrule
$\infty$ & $45$ & $1$ & $0$ & $0$ & $8\pi^2/(3\ln 2)$ \\
\end{tabular}
\end{ruledtabular}
\end{table}

The asymptotic behavior as $n \to \infty$ is that
$\theta_n \to \pi/4$ (the cone flattens),
$\MIE(\Gcal_\alpha) \to 1$, and per-context entropy decays
as $S_2(\Gcal_\alpha) \sim 8\pi^2/(3 n^2 \ln 2)$, so
$S_2(\Gcal_n) = n\cdot S_2(\Gcal_\alpha) \to 0$.  The
maximum of $S_2(\Gcal_n)$ across odd $n$ in
Table~\ref{tab:SM_ncycle} occurs at $n=7$.

\emph{Asymptotic constant.}  The flat-cone expansion gives
$\theta_n = \pi/4 + \pi^2/(8 n^2) + O(n^{-4})$ from
$\cos(\pi/(2n)) = 1 - \pi^2/(8 n^2) + O(n^{-4})$.  At
$\theta = \pi/4$ exactly, the configuration becomes
mutually unbiased and $\MIE = 1$; the deviation
$1-\MIE(\Gcal_\alpha)$ is therefore second-order in
$\delta\theta_n = \theta_n - \pi/4 \propto n^{-2}$, which
gives the $1/n^2$ scaling of $S_2(\Gcal_\alpha) = -\log_2
\MIE(\Gcal_\alpha) \sim (1-\MIE)/\ln 2$.  Numerical
extrapolation of $n^2\,S_2(\Gcal_\alpha)$ from
Table~\ref{tab:SM_ncycle} extended to $n \le 1001$ gives
the limit $37.9702 \approx 8\pi^2/(3\ln 2)$ to four
significant figures (verified by the script
\texttt{ncycle\_extension.py} of \ref{SM:scripts}).  A
closed-form derivation of the prefactor $8\pi^2/3$, and a
closed-form expression for $S_2(\Gcal_\alpha)$ at finite
$n$, remain open, listed in the main text Sec.~VIII.

\section{CHSH scenario: cell-by-cell derivations}
\label{SM:CHSH}

This section works through every CHSH entry of Tables~I, II,
and III in the main text. We describe the two-qubit
realization on $\ket{\Phi^+}$, specify the two angle regimes
(Bell-optimal and entropic-optimal), and then evaluate each
witness ($\chi_\mathrm{CHSH}$, $\CF$, $\mathrm{BC}_4^k$, $D$,
and the configuration-level quantities $E$, $c_\MU$, $S_2$)
in each regime. The Bell-optimal regime saturates the
bound $E \le c_\MU^2$; the entropic-optimal regime does not.

\subsection{Setup: the CHSH 4-cycle on \texorpdfstring{$\ket{\Phi^+}$}{}}
\label{SM:CHSH-setup}

The CHSH $4$-cycle is realized on two qubits
($\mathcal{H} = \mathbb{C}^2\otimes\mathbb{C}^2$, $d=4$)
in the Bell state
$\ket{\Phi^+} = (\ket{00} + \ket{11})/\sqrt{2}$. The four
observables are
\begin{equation}
\begin{aligned}
  \hat A_1 &= \hat\sigma_{a_0}\otimes\hat{\mathbb{I}}, &
  \hat A_2 &= \hat{\mathbb{I}}\otimes\hat\sigma_{b_0},\\
  \hat A_3 &= \hat\sigma_{a_1}\otimes\hat{\mathbb{I}}, &
  \hat A_4 &= \hat{\mathbb{I}}\otimes\hat\sigma_{b_1},
\end{aligned}
\end{equation}
where
$\hat\sigma_t = \cos(t)\hat\sigma_z + \sin(t)\hat\sigma_x$
is the equatorial-plane Pauli observable at angle $t$.
Adjacent pairs in the 4-cycle commute automatically because
each contains one Alice operator and one Bob operator. The
context-shared ``middle'' observable $\hat A_\alpha$
therefore also lives on one factor only; in each context,
the left and right pairs act on the same alternating
tensor structure.

As explained in main text Sec.~V.A, the
$n=4$ case requires a sign flip on one term to obtain a
non-trivial cyclic inequality. We use the standard CHSH
form
\begin{equation}
  \chi_\mathrm{CHSH}
  = \langle \hat A_1 \hat A_2\rangle
    + \langle \hat A_2 \hat A_3\rangle
    - \langle \hat A_3 \hat A_4\rangle
    + \langle \hat A_4 \hat A_1\rangle,
  \label{eq:SM_CHSH_chi}
\end{equation}
which coincides with the standard CHSH correlator
$S = \langle a_0 b_0\rangle + \langle a_0 b_1\rangle +
\langle a_1 b_0\rangle - \langle a_1 b_1\rangle$ after
substituting the adjacency labels. The noncontextual
bound is $\chi_\mathrm{CHSH} \le 2$, the quantum Tsirelson
bound~\cite{Tsirelson1980} is $\chi_\mathrm{CHSH} \le2\sqrt{2}$, and the no-signaling maximum is
$4$~\cite{PopescuRohrlich1994}. On the
Bell state, $\langle \hat\sigma_a\otimes
\hat\sigma_b\rangle_{\ket{\Phi^+}} = \cos(a-b)$, which we
use throughout the cell evaluations below.

\subsection{Two angle regimes}
\label{SM:CHSH-regimes}

The main text examines two representative choices of
$(a_0, b_0, a_1, b_1)$.

\emph{Bell-optimal angles.} The assignment
\begin{equation}
  (a_0, b_0, a_1, b_1) = (0,\ \pi/4,\ \pi/2,\ -\pi/4)
  \label{eq:SM_CHSH_bell_angles}
\end{equation}
is the standard CHSH choice. Substituting in the
Bell-state correlators,
$\langle\hat\sigma_a\otimes\hat\sigma_b\rangle =
\cos(a-b)$:
\begin{align}
  \langle \hat A_1 \hat A_2\rangle &= \cos(0-\pi/4)
    = 1/\sqrt{2},\\
  \langle \hat A_2 \hat A_3\rangle &= \cos(\pi/2 - \pi/4)
    = 1/\sqrt{2},\\
  \langle \hat A_3 \hat A_4\rangle &= \cos(\pi/2-(-\pi/4))
    = -1/\sqrt{2},\\
  \langle \hat A_4 \hat A_1\rangle &= \cos(0-(-\pi/4))
    = 1/\sqrt{2}.
\end{align}
Summing with signs $(+,+,-,+)$ per
Eq.~\eqref{eq:SM_CHSH_chi} gives $\chi_\mathrm{CHSH} =
2\sqrt{2}$, saturating the Tsirelson bound.

\emph{Entropic-optimal angles.} The assignment
\begin{equation}
  (a_0, b_0, a_1, b_1) \approx (0,\ 0.916,\ 0.524,\ -2.880)
  \label{eq:SM_CHSH_ent_angles}
\end{equation}
is identified in Ref.~\cite{Chaves2013} as the configuration
that maximizes the Shannon-entropic inequality
$\mathrm{BC}_4^k$ on $\ket{\Phi^+}$. Numerical evaluation
with Eq.~\eqref{eq:SM_CHSH_chi} gives
$\chi_\mathrm{CHSH} \approx 1.5329$, which is strictly below
the noncontextual bound $2$ and in particular below
Tsirelson $2\sqrt{2}$: at these angles the CHSH inequality
is not violated. The trade-off between $\chi_\mathrm{CHSH}$
and $\mathrm{BC}_4^k$ is the complementarity highlighted in
main text Sec.~V.C.

\subsection{Cell values per regime}
\label{SM:CHSH-cells}

We collect the per-cell derivations for each angle choice.
Configuration-level quantities ($E$, $c_\MU$, $S_2$) are
determined by the observables alone; state-dependent
quantities ($\chi_\mathrm{CHSH}$, $\mathrm{BC}_4^k$, $\CF$,
$D$) are evaluated on $\ket{\Phi^+}$.

\subsubsection{Bell-optimal angles: \texorpdfstring{$(0, \pi/4, \pi/2, -\pi/4)$}{}}

\emph{$\chi_\mathrm{CHSH} = 2\sqrt{2}$.} Derived in
\ref{SM:CHSH-regimes}.

\emph{$\CF = \sqrt{2}-1$.} For the CHSH 4-cycle, the
contextual fraction of
Ref.~\cite{AbramskyBarbosaMansfield2017} is
$\CF = \max\bigl(0, (\chi_\mathrm{CHSH}-2)/(4-2)\bigr)$,
where $\chi_\mathrm{CHSH}=2$ is the local bound and
$\chi_\mathrm{CHSH}=4$ is the no-signaling
maximum~\cite{PopescuRohrlich1994}. On
$\ket{\Phi^+}$, $\CF = (2\sqrt{2}-2)/2 = \sqrt{2}-1
\approx 0.4142$.

\emph{$\mathrm{BC}_4^0 = -1.2018$~bits.} The Chaves--Fritz
inequality with sign-flip at $k=0$ reads
\begin{equation}
  \mathrm{BC}_4^0 = H(X_0 X_1) + H(X_2) + H(X_3)
    - H(X_1 X_2) - H(X_2 X_3) - H(X_3 X_0),
\end{equation}
where $X_\alpha$ is the $\pm 1$ outcome of $\hat A_\alpha$
(with the convention $X_0 \equiv X_4$ for the cyclic index).
Numerical evaluation on $\ket{\Phi^+}$ at the Bell-optimal
angles gives $-1.2018$ bits. The entropic inequality is
comfortably silent, consistent with Ref.~\cite{Chaves2013}.

\emph{$D = 0$.} The commutator
$[\hat A_{\alpha-1}, \hat A_{\alpha+1}]$ in each context
involves one Alice factor and one Bob factor; on
$\ket{\Phi^+}$ the expectation of any TR-odd Hermitian
operator vanishes because $\ket{\Phi^+}$ is invariant
under the combined time-reversal $\hat T_A \otimes \hat T_B$.
Hence $D(\Gcal, \ket{\Phi^+}) = 0$ identically at all four
contexts, giving $D = 0$ overall.

\emph{Configuration-level $E$, $c_\MU$, $S_2$.} Because
$\hat A_\alpha$ and $\hat A_{\alpha+1}$ act on
complementary tensor factors (one Alice, one Bob), their
joint-eigenspace projectors are rank-$1$ tensor products
$\ket{a_i}\ket{b_i}$; the same holds for the right pair
$(\hat A_{\alpha+1}, \hat A_\alpha)$ with different Alice
or Bob factors. For the Bell-optimal angles, every nonzero
amplitude overlap $|\langle v_i | w_j\rangle|$ between
left- and right-family rank-$1$ eigenvectors is
$1/\sqrt{2}$ by construction, so
$c_\MU(\Gcal_\alpha) = 1/\sqrt{2}$ and $\MIE(\Gcal_\alpha) =
1/2$ uniformly. The bound $\MIE \le c_\MU^2$ of the main
text is therefore saturated: $\MIE = c_\MU^2 = 1/2$ at every
context. Consequently,
\begin{eqnarray}
\label{eq:SM_S2_CHSH_bell_a}
  S_2(\Gcal_\alpha) &=& -\log_2 (1/2) = 1\ \text{bit},\\
\label{eq:SM_S2_CHSH_bell_b}
  S_2(\Gcal_\mathrm{CHSH}) &=& 4\ \text{bits}.
\end{eqnarray}

\subsubsection{Entropic-optimal angles:
\texorpdfstring{$(0, 0.916, 0.524, -2.880)$}{}}

\emph{$\chi_\mathrm{CHSH} \approx 1.5329$.} Derived in
\ref{SM:CHSH-regimes}. Strictly below the NC bound $2$;
the CHSH inequality is not violated at these angles.

\emph{$\CF = 0$.} Since
$\chi_\mathrm{CHSH} < 2$, the normalized-violation formula
gives $\CF = \max(0, (\chi_\mathrm{CHSH}-2)/2) = 0$. The
contextual fraction is silent at the entropic-optimal
angles, tracking $\chi_\mathrm{CHSH}$.

\emph{$\mathrm{BC}_4^0 \approx +0.2309$~bits.} Numerical
evaluation of the Chaves--Fritz entropic inequality of the
form stated in the Bell-optimal case, with the entropic-optimal angles on $\ket{\Phi^+}$, yields $+0.2309$ bits,
which represents an entropic quantum
violation~\cite{Chaves2013}.

\emph{$D = 0$.} By the same TR-invariance argument as in
the Bell-optimal case; the argument does not depend on the
specific angle choice.

\emph{Configuration-level $E$, $c_\MU$, $S_2$.} At the
entropic-optimal angles the four contexts split into two
tiers by the neighbor-angle pattern:
\begin{equation}
\begin{aligned}
  \Gcal_1, \Gcal_3:\quad
    &\MIE(\Gcal_\alpha) \approx 0.8147,\
     c_\MU(\Gcal_\alpha) \approx 0.9469,\\
    &S_2(\Gcal_\alpha) \approx 0.2956\ \text{bits},\\
  \Gcal_2, \Gcal_4:\quad
    &\MIE(\Gcal_\alpha) \approx 0.8748,\
     c_\MU(\Gcal_\alpha) \approx 0.9659,\\
    &S_2(\Gcal_\alpha) \approx 0.1929\ \text{bits}.
\end{aligned}
\label{eq:SM_CHSH_ent_tiers}
\end{equation}
The bound $\MIE \le c_\MU^2$ is strict in both tiers
(first tier: $0.8147 < 0.8967$; second tier: $0.8748 < 0.9329$). Summing
over contexts,
\begin{equation}
  S_2(\Gcal_\mathrm{CHSH}) = 2(0.2956) + 2(0.1929)
    \approx 0.9770\ \text{bits}.
  \label{eq:SM_S2_CHSH_ent}
\end{equation}

The Bell-optimal regime realizes exact saturation of
$\MIE = c_\MU^2$, while the entropic-optimal regime
realizes strict non-saturation $\MIE < c_\MU^2$ in every
context; CHSH thus realizes both regimes of the bound
$E \le c_\MU^2$ within a single configuration by varying
the measurement angles alone.

\section{Numerical scripts and verification}
\label{SM:scripts}

All numerical values in the main text tables, figures, and
in this Supplemental Material were produced by a
self-contained Python codebase that we describe below.
The scripts use only standard scientific libraries
(NumPy~\cite{numpy}; Matplotlib~\cite{matplotlib} for the
figures) and are available from the authors on request.
We list the principal scripts here and indicate which
tables and figures each reproduces.

\begin{itemize}
\item \texttt{kcbs\_all.py}: constructs the spin-$1$
realization of \ref{SM:KCBS-setup}, computes
$\MIE(\Gcal_\alpha)$, $c_\MU(\Gcal_\alpha)$,
$S_2(\Gcal_\alpha)$, $\chi(\ket{0_z})$, $\CF(\ket{0_z})$,
$\mathrm{BC}_5^k(\hat\rho(p))$ for the grid of $p$ values,
and $D(\Gcal, \ket{+1_z})$. Reproduces
Table~\ref{tab:SM_BC} and the KCBS entries of main text
Tables~I--III.

\item \texttt{ncycle\_extension.py}: generalizes the
spin-$1$ construction to odd
$n \in \{5, 7, 9, 11, 13, 15\}$, verifies
$c_\MU(\Gcal_\alpha) = 1$ per Prop.~3 of the main text,
and tabulates $\MIE(\Gcal_\alpha)$ and $S_2(\Gcal_n)$.
Reproduces Table~\ref{tab:SM_ncycle}.

\item \texttt{chsh\_all.py}: constructs the CHSH
configuration of \ref{SM:CHSH-setup} on $\ket{\Phi^+}$,
evaluates all witnesses at both angle regimes
(Bell-optimal and entropic-optimal), verifies
$\chi_\mathrm{CHSH} = 2\sqrt{2}$ at Bell-optimal with
saturation $\MIE = c_\MU^2 = 1/2$, and
$\mathrm{BC}_4^k \approx +0.2309$ bits at entropic-optimal.
Reproduces the CHSH entries of main text Tables~I--III.

\item \texttt{additive\_exact.py}: verifies the additive
exactness criterion of \ref{SM:exactness} for both KCBS
(mechanism~ii: five duplicated pairs that all contribute
zero by cyclic orthogonality) and CHSH (mechanism~i: no
duplicated pairs, by the alternating Alice/Bob tensor
structure). Reproduces Table~\ref{tab:SM_exact_verify}.

\item \texttt{make\_fig1.py}, \texttt{make\_fig2.py}:
generate Figures~1 and~2 of the main text, respectively.
\end{itemize}

Scripts are self-contained and reproduce the numerical
values to floating-point precision ($< 10^{-12}$ relative
error on all cell values quoted in the tables).

\bibliographystyle{apsrev4-2}
\bibliography{References_merged}